\newcommand{\blind}{0}
\newtheorem{theorem}{Theorem}
\begin{document}

\def\spacingset#1{\renewcommand{\baselinestretch}%
{#1}\small\normalsize} \spacingset{1}

%%%%%%%%%%%%%%%%%%%%%%%%%%%%%%%%%%%%%%%%%%%%%%%%%%%%%%%%%%%%%%%%%%%%%%%%%%%%%%

\if0\blind
{
  \title{\bf A Case-Study of Sample-Based Bayesian Forecasting Algorithms}
  \author{Taylor R. Brown
  \hspace{.2cm}\\
    Department of Statistics, University of Virginia}
  \maketitle
} \fi

\if1\blind
{
  \bigskip
  \bigskip
  \bigskip
  \begin{center}
    {\LARGE\bf Title}
\end{center}
  \medskip
} \fi

\bigskip
\begin{abstract}
For a Bayesian, real-time forecasting with the posterior predictive distribution can be challenging for a variety of time series models. First, estimating the parameters of a time series model can be difficult with sample-based approaches when the model's likelihood is intractable and/or when the data set being used is large. Second, once samples from a parameter posterior are obtained on a fixed window of data, it is not clear how they will be used to generate forecasts, nor is it clear how, and in what sense, they will be ``updated" as interest shifts to newer posteriors as new data arrive. This paper provides a comparison of the sample-based forecasting algorithms that are available for Bayesians interested in real-time forecasting with nonlinear/non-Gaussian state space models. An applied analysis of financial returns is provided using a well-established stochastic volatility model. The principal aim of this paper is to provide guidance on how to select one of these algorithms, and to describe a variety of benefits and pitfalls associated with each approach.
\end{abstract}

\noindent%
{\it Keywords:}  particle filter, Markov chain Monte Carlo, state space model, forecasting, stochastic volatility

%-----------------------------------------------------------------------------%
\spacingset{1.45}
\section{Introduction}
\label{sec:intro}

In this paper, we consider the problem of forecasting from a Bayesian perspective. We concern ourselves with approximating, at each point in time, the \emph{posterior predictive distribution}, using a variety of online sample-based algorithms for nonlinear and/or non-Gaussian state space models. We examine algorithms that feature some combination of particle filters and Markov chain Monte Carlo samplers. 

% Denote the observed time series data by $y_1, \ldots, y_T$ (for some $T$) and the model's parameter vector by $\theta$. State space models make use of a sequence of latent state variables, as well. Time $t$'s state will be denoted by $x_t$. At each point in time, the object of interest, the ppd, can be written in two ways. 

% \begin{align}
% p(y_{t+1} \mid y_{1:t}) 
% &= \iint p(y_{t+1}, x_t \mid \theta, y_{1:t}) p(\theta \mid y_{1:t}) dx_t d\theta  \\ \label{option1}
% &= \iint p(y_{t+1} \mid x_t, \theta, y_{1:t}) p(x_t, \theta \mid y_{1:t}) dx_t d\theta  \label{option2} 
% \end{align}

% The first option is an expectation with respect to the parameter posterior, while the second is an expectation taken with respect to a higher-dimensional distribution. These two expressions suggest a dichtomy of all Monte-Carlo approximation strategies. The first will require samples from the parameter posterior, while the second will require samples from a kind of ``filtering" distribution.

The first (and simplest) approach used in this paper consists of using a single particle filter with its static parameters set to posterior point estimates. If we treat these parameters as ``known," particle filters will readily provide forecasts. With enough data, if the parameter posterior becomes informative enough, ignoring the uncertainty of point estimates might not be an unreasonable assumption, practically speaking. 

A second approach considered in this paper is the \emph{particle swarm filter} \cite{pswarm} (not to be confused with particle swarm optimization). This idea continues in the spirit of the first approach, but takes into account parameter uncertainty. Again, a separate algorithm obtains samples from a parameter posterior distribution after enough observed data has been accumulated. Then these samples are used to instantiate many particle filters, whose outputs are averaged together. Parameter posterior samples might be refreshed by successive runs of an MCMC sampler. 

A third approach avoids the requirement of using a secondary algorithm that samples from the parameter posterior. Imagine running a filtering algorithm for a model with a state vector augmented with the unknown parameters \cite{selforganizing}, \cite{liuandwest}. After iterating through enough data, the marginal filtering distribution will ideally resemble the true parameter posterior, and the forecasts will hopefully resemble the true posterior predictive distribution. 

The strengths and weaknesses of these approaches will be investigated in the case study in section \ref{sec:case_study}. Prior to that, section \ref{sec:models} will describe a high level view of the models, and then all of our algorithms will be described in section \ref{sec:algos}. 

Finally, it should be mentioned that there are a number of other extremely popular sample-based approaches for obtaining forecasts under parameter uncertainty \cite{mcmcsuffstatspfs}, \cite{resamplemove}, \cite{smcsquared} and \cite{storvik}. All of these either assume that the state space model is of a certain form, or they are not online. For our particular discussion, these are disqualifying criteria, and so they will not be considered any further in this paper.

%-----------------------------------------------------------------------------%
%-----------------------------------------------------------------------------%
\section{Models}
\label{sec:models}
%-----------------------------------------------------------------------------%
\subsection{State Space Models} \label{subsec:ssms}

Let $T \in \mathbb{N}$ and for $t=1,\ldots,T$, define $y_t$ to be an observable random variable and $x_t$ to be an unobserved/latent random variable. In the case of our model, all of these will be $\mathbb{R}$-valued. Denote the two sequences as $y_{1:T}$ and $x_{1:T}$. 

A state space model is a partially-observed Markov chain with additional assumptions of conditional independence. The observations are usually assumed to be independent after conditioning on all of the states, and these states are usually assumed to be marginally Markovian themselves. For our particular model, though, the state transitions depend on lagged observations. Given a vector of real-valued parameters $\theta$, we will call the first time's state distribution $\mu(x_1 \mid \theta)$, each state transition density $f(x_t \mid x_{t-1}, y_{t-1} \theta)$, and the observation densities $g(y_{t} \mid x_{t})$. 

\begin{figure}%[ht] % ’ht’ tells LaTeX to place the figure ’here’ or at the top of the page
\centering
\tikzset{rectangle state/.style={draw,rectangle}} % make rectangles available
\begin{tikzpicture}[scale=0.2]
  \node[state, minimum size=1.5cm] (y1) {$y_{1}$};
  \node[rectangle state,minimum size=1.5cm] (x1) [below= of y1] {$x_1$};
  \node[state, minimum size=1.5cm] (y2) [right= of y1] {$y_{2}$};
  \node[rectangle state, minimum size=1.5cm] (x2) [right= of x1] {$x_{2}$};
  \node[state, draw=none, minimum size=1.5cm] (ykdots) [right= of y2] {$\cdots$};
  \node[state, draw=none, minimum size=1.5cm] (xkdots) [right= of x2] {$\cdots$};
  \node[state, minimum size=1.5cm] (ynm1) [right= of ykdots] {$y_{n-1}$};
  \node[rectangle state, minimum size=1.5cm] (xnm1) [below= of ynm1] {$x_{n-1}$};
  \node[state, minimum size=1.5cm] (yn) [right= of ynm1] {$y_{n}$};
  \node[rectangle state, minimum size=1.5cm] (xn) [below= of yn] {$x_{n}$};
  \draw[->] 
    (x1) edge[left] (y1)    
    (x2) edge[left] (y2)
    (xnm1) edge[left] (ynm1)
    (xn) edge[left] (yn)
    (x1) edge[left] (x2)
    (y1) edge[left] (x2)
    (x2) edge[left] (xkdots)
    (xkdots) edge[left] (xnm1)
    (ykdots) edge[left] (xnm1)
    (xnm1) edge[left] (xn)
    (ynm1) edge[left] (xn);
\end{tikzpicture}
\caption{Diagram of a SSM}
\label{fig:ssm_diagram}
\end{figure}
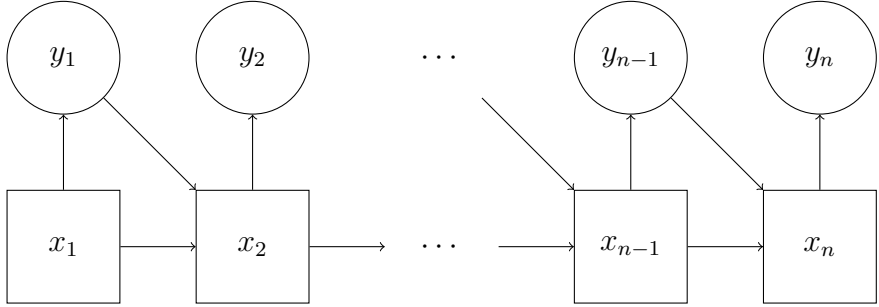

When the issue of parameter uncertainty remains unaddressed, and the primary goal is state inference in real-time, \emph{filtering distributions} are the primary object of interest. A filtering distribution is the distribution of an unknown state, at a given point in time, conditional on all of the information obtained up to that point in time: $p(x_t \mid y_{1:t}, \theta)$. Filtering distributions can be expressed recursively in time using Bayes' rule. 

Particle filters, explained below in section \ref{subsec:particle_filtering}, provide recursive sample-based approximations to expectations of these distributions. They also provide an approximation to the observed data likelihood $p(y_{1:t} \mid \theta)$, a quantity which, in our case, is an intractable integral of the complete-data likelihood. 

%-----------------------------------------------------------------------------%
\subsection{Parameter Uncertainty} \label{subsec:param_uncertainty}

The parameters of state space models are usually unknown in practice. In these cases, parameter inference is necessary. In Bayesian statistics, a primary goal of inference is the \emph{posterior distribution} $p(\theta \mid y_{1:t})$. Because we are interested in forecasting, our situation is complicated by the fact that the amount of data we possess varies as time $t$ increases. At each moment in time, there is a new targetable posterior. Using Bayes' formula (and assuming without loss of generality that the parameters are continuous) we can write it as 

\begin{eqnarray}
p(\theta \mid y_{1:t} ) = \frac{ p(y_{1:t} \mid \theta) p(\theta) }{ \int p(y_{1:t} \mid \vartheta) p(\vartheta) d\vartheta}. \label{eqn:posterior}
\end{eqnarray}

The primary purpose of this paper is forecasting, though, so we will be primarily interested in the \emph{posterior predictive distributions} (ppds). These are quantities that depend on a sequence of posterior distributions. At each moment in time, an \emph{up-to-date} posterior is used to average over the parameter space:
\begin{eqnarray}
p(y_{t+1} \mid y_{1:t}) = \int p(y_{t+1} \mid \theta, y_{1:t}) p(\theta \mid y_{1:t} ) d \theta. \label{eqn:ppd}
\end{eqnarray}
This paper is concerned with nonlinear and/or non-Gaussian state space models, so both factors in the integral of equation \ref{eqn:ppd} are intractable, in addition to being time dependent. 

Note that, for any fixed segment of data $y_{1:t}$, a plethora of Markov chain Monte Carlo algorithms are available \cite{mcmcforssms}, as well as other approaches that fall under the category of variational inference \cite{Blei2017}. The former are all sample-based, and do not require evaluations of the normalized posterior. We will focus on those in this paper. A specific algorithm that can be used to estimate the posterior distribution of the model we consider will be described in section \ref{subsec:mcmc}.

%-----------------------------------------------------------------------------%

\subsection{A Stochastic Volatility Model}
\label{sec:svol_mod}

This paper compares forecasting performance of different algorithms using the model of \cite{est_asym_svol}, which is a time discretized version of the model described in \cite{cts_time_svol_lev}. The parameters of this model are $\theta = (\phi, \mu, \sigma^2, \rho)$. We write it in a slightly less traditional way so that the notation mirrors that of the previous section \ref{subsec:ssms}: 

\begin{align}
y_t &\mid x_t \sim \mathcal{N}(0, \exp[x_t]) \\
x_1 &\mid \theta \sim \mathcal{N}(\mu, \sigma^2/(1-\phi^2)) \\
x_{t} &\mid y_{t-1}, x_{t-1}, \theta \sim \mathcal{N}(\mu + \phi(x_{t-1} - \mu) + \rho\sigma \exp[-x_{t-1}/2]y_{t-1}, \sigma^2(1-  \rho^2) ).
\end{align}

We assume the parameters are constrained as follows: $\mu \in \mathbb{R}$, $\sigma^2 \in \mathbb{R}^+$, $-1 < \phi < 1$, and $-1 < \rho < 1$.

%-----------------------------------------------------------------------------%
\section{Algorithms}
\label{sec:algos}

% This section details the algorithms used in this paper. Particle filtering is described in section \ref{subsec:particle_filtering}. Using fixed parameters, this algorithm is used to obtain moment approximations to forecast distributions, as well as to approximate likelihoods. Parameter estimation is performed using particle marginal Metropolis-Hastings (PMMH) algorithm. This is described in section \ref{subsec:mcmc}. Finally, the particle swarm filter is described in section \ref{sec:pswarm}.

%-----------------------------------------------------------------------------%

\subsection{Standard Particle Filters} \label{subsec:particle_filtering}

For state space models with a parameter vector that is assumed to be known, particle filtering algorithms provide recursive, simulation-based approximations for expectations with respect to the filtering distributions $p(x_t \mid y_{1:t}, \theta)$ and likelihood evaluations \cite{tutorial}, \cite{gordonetal}, \cite{blake1997condensation}\footnote{One of the important considerations of our model is that the state transition will depend on the previous time's observed quantity. Many descriptions of particle filters will assume that the latent state sequence transition distributions depend only on previous state values. Thankfully, there only needs to be minor modifications in order to handle this \cite{Xu2019}.}. Algorithm \ref{alg:sisr} in Appendix \ref{sec:appendixa} details one step of the Sequential Importance Sampling with Resampling (SISR) algorithm. There are many resampling strategies \cite{resampling_methods} and strategies for choosing different proposal distributions, so this is a fairly generic algorithm. At each time point $t$, we make use of the weighted samples $\{\tilde{X}_t^i, w(\tilde{X}_t^i)\}_{i=1}^N$ to approximate expectations with respect to the filtering distribution, as well as conditional likelihoods $p(y_t \mid y_{1:t-1}, \theta)$. 

% \begin{equation}
% \hat{\mathbb{E}}[f(x_t) \mid y_{1:t},\theta] := \sum_{i=1}^N \frac{w(\tilde{X}_t^i)}{ \sum_k w(\tilde{X}_t^k) } f(\tilde{X}_t^i)    
% \end{equation}

When one-step-ahead forecasting is the goal, one will be interested in the first two moments of the forecast distribution (with a parameter that is assumed to be known): $\mathbb{E}[y_{t+1} \mid y_{1:t}, \theta]$ and $\mathbb{V}[y_{t+1} \mid y_{1:t}, \theta]$. For this particular model, the first expectation is defined to be $0$, so we are interested in the second one. As a reminder, particle filters provide approximations to expectations with respect to the filtering distributions. The bridge between these two distributions is provided by iterating expectations. It is easy to show that

\begin{equation}
\mathbf{E}[y^2_{t+1} \mid y_{1:t}, \theta] = \mathbf{E}[ \mu + \phi(x_{t} - \mu) + \rho\sigma \exp[-x_{t}/2]y_{t} + .5\sigma^2(1-  \rho^2) \mid  y_{1:t}, \theta].
\end{equation}
The dependence of this quantity on $y_t$ is sometimes referred to as the {\it leverage effect} in finance \cite{Yu2005}. Instead of approximating this expectation, for the sake of simplicity, we will examine the filter means in section \ref{sec:comp_filt_means}.

We need to reemphasize that these algorithms assume $\theta$ is known. For our case study, when we use the SISR filter for forecasting, we will first approximate the posterior conditioning on a training set, $p(\theta \mid y_{1:s})$. Then we will take a point estimate from this (e.g. $\hat{\theta}_s := \mathbf{E}[\theta \mid y_{1:s}]$) and plug this in to obtain naive ppd approximations: $\hat{p}(y_{t} \mid y_{1:t-1}, \hat{\theta}_s).$

%-----------------------------------------------------------------------------%

\subsection{Pseudo-Marginal MCMC} \label{subsec:mcmc}

When used for forecasting, the SISR algorithm (algorithm \ref{alg:sisr}) will require a parameter estimate in place of a known parameter. Additionally, the initialization of the particle swarm filter (algorithm \ref{alg:pswarm_filter}) will require posterior samples for its initialization. There are many ways to estimate a posterior--sample-based \cite{brooks2011handbook} and otherwise--but we will exclusively focus our attention on the pseudo-marginal approach \cite{andrieu2009}. Specifically, we will make use of the particle marginal Metropolis-Hastings (PMMH) algorithm \cite{pmcmc}. Given a fixed window of data, this will provide samples distributed according to the parameter posterior $p(\theta \mid y_{1:t})$. 

The primary benefit of this algorithm is that it does not require evaluations of the likelihood. Algorithmically, PMMH is equivalent to the Metropolis-Hastings (MH) algorithm with the exception that an approximate likelihood is used in place of an exact evaluation. Despite using approximate evaluations from a particle filter, under suitable regularity conditions, the Markov chain we produce with PMMH, $\theta^1, \theta^2, \ldots, \theta^N$, targets a posterior of interest exactly. A complete description of this algorithm is provided by algorithm \ref{alg:pmmh} in Appendix \ref{sec:appendixa}.

Note that PMMH requires the user to make choices about which the length of a data window $y_{1:t}$ to condition on, the parameter proposal $q(\cdot \mid \theta)$ used to propose parameters to jump to, and the specific particle filtering algorithm used to approximate the likelihood. The computational efficiency of this algorithm depends heavily on all three of these things. If one can tune this algorithm effectively, though, there is no need for model-specific derivations like there would be in, for instance, Gibbs sampling. More details on how we tuned our PMMH algorithm can be found in Appendix \ref{sec:appendixB}.
%-----------------------------------------------------------------------------%

\subsection{Two Liu-West Filters}

The original Liu-West Filter \cite{liuandwest} comes from a long line of algorithms that ambitiously attempt to sample from $p(x_t, \theta \mid y_{1:t})$ at every time point. The fundamental idea here is to run one particle filter on a model with the extended state vectors $\tilde{x}_t = (x_t, \theta)$. This was first discussed in \cite{selforganizing}. This paper also describes the primary difficulty with this approach. With a deterministic transition for the parameter component, the unique sample values cannot increase through time. The resampling step of the filtering algorithm further impoverishes the samples. Theoretically, the asymptotic variances of the parameter posteriors increase in time \cite{chopin_clt}.

\cite{selforganizing} also proposed a strategy of mitigation that has been adopted by many papers since. They assume, artificially, that the parameters are no longer static, but in fact are time-varying with known dynamics. In its simplest form, this simply changes the problem of parameter estimation into the problem of filtering on a (hopefully) similar model. In other words, they propose a trouble-free inference task on an approximate model in place of a difficult inference on the exact model. 

This approach is also used in the Liu-West Filter \cite{liuandwest}, which is an algorithm used extensively in section \ref{sec:case_study}. This paper proposes a more complicated set of dynamics for the time-varying ``parameters" in the particle filter in an attempt to lessen the information reduction caused by the introduction of ``artificial evolution noise" for the parameters. The details of this algorithm can be found in algorithm \ref{alg:lwfilter1}. Note that, in addition to suggesting alternative ``parameter" dynamics, the filter described in \cite{liuandwest} resembles the auxiliary particle filter \cite{auxpf}, a competitor to the SISR algorithm we use in this paper. 

Unlike SISR and the particle swarm filter, the Liu-West filters provide both forecasts and parameter posterior samples, and because they only sweep through the data set once, are many orders of magnitude faster than any MCMC sampler. However, this speed comes at a price. As was already mentioned, they do perform filtering on a different model, so they introduce a non quantifiable ``bias" in their estimators \cite{estimationreview}. 

In an attempt to separate the effect of each of these two attributes, we propose and investigate a second version of the Liu-West filter (algorithm \ref{alg:lwfilter2}). It is similar to the first auxiliary-style version in that it has the same ``parameter" dynamics; however, it samples the $x_t$ components of the state with a SISR-like approach. Computationally speaking, this is less complex, so at the very least, we expect this version 2 algorithm to be slightly faster. Finally, our two versions of the Liu-West filter differ from the original description in \cite{liuandwest} in that the state transitions depend on lagged observations. 

As an aside, another strategy that mitigates the difficulty of parameter impoverishment, an alternative to assuming the parameters are time-varying, is to introduce diversity into the parameter samples by using a Markov kernel at each step of the particle filter \cite{mcmcsuffstatspfs} \cite{resamplemove}, \cite{smcsquared}. Instead of replacing one problematic model with a proxy model, it samples new parameters and state paths from the Markov transition kernel that leaves invariant the current target of all of the samples. The downside to this approach is that the computational cost of ``moving" these samples grows in time for most models. Fortunately, for a subclass of models that admit a particular structure, this is not the case \cite{mcmcsuffstatspfs} \cite{storvik} \cite{particlelearning}--however, the variance of these approximations increases in time. Regardless of the benefits and challenges of these algorithms, we do not investigate this class of algorithms because our model does not admit this type of structure, and because we are interested in online forecasting.

%-----------------------------------------------------------------------------%
\subsection{The Particle Swarm Filter}\label{sec:pswarm}

Instead of sampling both states and parameters together, at every time point, in a single algorithm, one may choose to treat this as two separate tasks. One may first obtain many parameter samples to approximate the marginal posterior $p(\theta \mid y_{1:s})$, and then somehow use these samples to approximate the conditional likelihoods. Then, when more data has been obtained, and there is a new posterior to approximate, the process begins anew. This is the idea behind the particle swarm filter \cite{pswarm}. 

A full description of the algorithm is provided in algorithm \ref{alg:pswarm_filter}. The basic idea is that many noninteracting particle filters are run, each initialized with a randomly chosen parameter. If the parameters are sampled from the posterior $p(\theta \mid y_{1:s})$, then the forecasts for time $s+1$ are asymptotically exact. For $t > s+1$, the forecasts are targetting an approximate ppd:
\begin{equation}
\int p(y_{t} \mid \theta, y_{1:t-1}) p(\theta \mid y_{1:s}) d\theta \label{eqn:approx_ppd}.
\end{equation}
\cite{pswarm} provides some asymptotic results showing the forecasts are asymptotically normal when targetting \ref{eqn:approx_ppd}. If one is concerned about the discrepancy between this and the true ppd, then, after more data has been accumulated, posterior samples may be intermittently updated by successive runs of an MCMC sampler, time permitting. 

Section \ref{sec:rough_timing} provides timing information for how long our particular MCMC algorithm took to run. Additionally, Theorem \ref{thm:bounded_bias} reassures us that the misspecification bias is bounded with a regular updating scheme. Its proof is simple and its assumptions are typical in the particle filtering literature. The case study in section \ref{sec:case_study}, in an effort to disadvantage the particle swarm filter as much as possible, does not perform any update of the posterior samples. 

\begin{theorem}\label{thm:bounded_bias}
Suppose that for any $\theta$, conditioning on some observed data set $y_{1:t}$,
\begin{enumerate}
\itemsep0em 
    \item $p(y_1 \mid \theta) > 0$,
    \item $p(y_t \mid x_{t-1}, \theta) > 0$ for all $t > 1$ and $x_{t-1}$, and
    \item $\sup_{x_t}g(y_t \mid x_t) < \infty$ for all $t \ge 1$.
\end{enumerate}
Then the bias of the particle swarm filter is bounded on a regular updating scheme. 
\end{theorem}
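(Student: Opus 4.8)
The plan is to write the misspecification bias at time $t$ as a single integral against the \emph{difference} of the stale and current parameter posteriors, and then bound the two factors in that integrand separately. First I would fix $t$ and let $s \le t-1$ denote the time of the most recent posterior refresh under the updating scheme. Comparing equation \ref{eqn:approx_ppd} with the true predictive from equation \ref{eqn:ppd}, the object of interest is
\begin{equation*}
\text{Bias}_t = \int p(y_t \mid \theta, y_{1:t-1}) \bigl[ p(\theta \mid y_{1:s}) - p(\theta \mid y_{1:t-1}) \bigr] \, d\theta,
\end{equation*}
since the conditional-likelihood factor $p(y_t \mid \theta, y_{1:t-1})$ is common to the approximate and exact integrands.

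The first real step is a uniform-in-$\theta$ bound on that conditional likelihood. Writing $p(y_t \mid \theta, y_{1:t-1}) = \int g(y_t \mid x_t) \, p(x_t \mid \theta, y_{1:t-1}) \, dx_t$ and using that the filtering density integrates to one, assumption 3 gives $p(y_t \mid \theta, y_{1:t-1}) \le \sup_{x_t} g(y_t \mid x_t) =: C_t < \infty$, a bound that does not depend on $\theta$. Assumptions 1 and 2 are what make every object here well defined: they force the observed-data likelihoods, and hence the normalizing constants in equation \ref{eqn:posterior} for both posteriors, to be strictly positive. Factoring $C_t$ out and recognizing the remaining integral as twice the total-variation distance between the two posteriors then yields the crude pointwise estimate $\lvert \text{Bias}_t \rvert \le 2 C_t < \infty$, which already needs only assumption 3 plus well-definedness.

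The role of the regular updating scheme is to upgrade this pointwise finiteness to a uniform-in-time bound by controlling the total-variation term. Under a regular scheme the lag $t-1-s$ never exceeds a fixed horizon $L$, so the two posteriors differ only through conditioning on the incremental block $y_{s+1:t-1}$, whose likelihood factorizes as $p(y_{s+1:t-1} \mid \theta, y_{1:s}) = \prod_{j=s+1}^{t-1} p(y_j \mid \theta, y_{1:j-1})$. Each factor is bounded above by $C_j < \infty$ (assumption 3) and strictly positive (assumptions 1 and 2), so over the bounded window the Radon--Nikodym derivative $p(\theta \mid y_{1:t-1}) / p(\theta \mid y_{1:s})$, which equals this incremental likelihood divided by its posterior average, stays between finite positive limits; this keeps the total-variation distance strictly below one and the product $C_t \cdot d_{\mathrm{TV}}$ controlled.

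I expect the main obstacle to be the denominator of that likelihood ratio, i.e.\ the marginal incremental likelihood $\int p(y_{s+1:t-1} \mid \vartheta, y_{1:s}) \, p(\vartheta \mid y_{1:s}) \, d\vartheta$: the positivity assumptions guarantee it is positive for any \emph{fixed} window, and the bounded lag guarantees it is a product of finitely many factors, but arguing that it stays bounded away from zero \emph{uniformly in time} is the delicate point on which the strength of the conclusion hinges. If one is content with per-step finiteness, the estimate $\lvert \text{Bias}_t \rvert \le 2 C_t$ from the second paragraph already suffices, which is presumably what is meant by calling the proof simple.
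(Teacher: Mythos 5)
Your proposal is correct and follows the same basic route as the paper---both arguments reduce the problem to the discrepancy between the stale posterior $p(\theta \mid y_{1:s})$ and the up-to-date one, and both invoke positivity of the marginal likelihoods (assumptions 1 and 2) together with finiteness of $p(y_{1:r} \mid \theta)$ (which follows from assumption 3)---but yours is substantially more complete than what the paper actually writes down. The paper's entire proof is the single triangle-inequality step
\begin{align*}
\left|p(\theta \mid y_{1:t}) - p(\theta \mid y_{1:s})\right|
&\le
p(\theta) \left[ \frac{p( y_{1:t} \mid \theta)}{p(y_{1:t})} + \frac{p( y_{1:s} \mid \theta)}{p(y_{1:s})} \right],
\end{align*}
followed by the remark that the right-hand side is finite; it never bounds the conditional-likelihood factor $p(y_t \mid \theta, y_{1:t-1})$ appearing in the predictive integral, and it never passes from this pointwise bound on the posterior densities to a bound on the bias of the forecast itself. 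Your proof supplies exactly those missing links: the uniform-in-$\theta$ bound $p(y_t \mid \theta, y_{1:t-1}) \le C_t := \sup_{x_t} g(y_t \mid x_t)$ from assumption 3, and the observation that the remaining integral is at most twice a total-variation distance, yielding $|\text{Bias}_t| \le 2C_t$. Note that integrating the paper's right-hand side over $\theta$ gives exactly $2$, so your estimate and the paper's are the same bound, with yours completing the chain of reasoning the paper leaves implicit. Your closing caveat is also well taken: since $C_t$ depends on $y_t$, what is actually proved is per-time-step finiteness rather than a bound uniform in $t$, and the paper's proof makes no quantitative use of the regular updating scheme beyond $s$ being held fixed between refreshes; the delicate point you isolate---keeping the incremental marginal likelihood bounded away from zero uniformly in time---is genuinely not addressed in the paper, which, as your last sentence correctly guesses, is content with the cruder estimate.
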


\begin{proof}
For $t > s$
\begin{align*}
\left|p(\theta \mid y_{1:t}) - p(\theta \mid y_{1:s})\right|
&\le 
p(\theta) \left[ \left| \frac{p( y_{1:t} \mid \theta)}{p(y_{1:t})}\right| + \left| \frac{p( y_{1:s} \mid \theta)}{p(y_{1:s})} \right|\right].
\end{align*}

The right hand side is finite because, for any $r \in \mathbb{N}$, $p(y_{1:r}) > 0$ and $p(y_{1:r} \mid \theta) < \infty$. 
\end{proof}

%-----------------------------------------------------------------------------%
%-----------------------------------------------------------------------------%
    
\section{The Case Study}
\label{sec:case_study}

This case study uses the stochastic volatility model described in section \ref{sec:svol_mod} and uses the forecasting algorithms described in section \ref{sec:algos}. The data set, a univariate time series, is comprised of continuously-compounded percent returns of an exchange traded fund that tracks the S\&P 500 stock index (ticker: SPY). The data starts on January 1, 2010, and runs until July 29, 2022. This data is freely available from Yahoo! Finance, and was downloaded using the \verb|quantmod| library in \verb|R| \cite{quantmod}. For the purposes of this study, we will call the training data $y_{1:s}$ and the test data $y_{s+1:T}$.

The code of this case-study is freely available on a public Github repository \cite{Brown_A_Case-Study_of_2022}. The files are organized into three parts: data getting/cleaning, analysis, and visualization. Most of the high-level instructions are written in \verb|R| scripts. The computationally intensive portions--the filters and the MCMC samplers--are written in \verb|c++| and make use of the \verb|pf| and \verb|ssme| libraries \cite{pf_paper} \cite{Brown_SSME_a_c_2022}. \verb|GNU Make| \cite{gnu_make} is used to manage the entire project's workflow \cite{JSSv094c01}. Plots are created with the \verb|ggplot2| library in \verb|R| \cite{ggplot} \cite{ggtern} \cite{ggally}.

The results reported in this paper come from the code being run on a Dell Latitude 5420 laptop running Ubuntu 20.04.4 LTS. The machine has an 11th Gen Intel(R) Core(TM) i5-1145G7 @ 2.60GHz chip and 16 gigabytes of RAM. The \verb|c++| portion of the code was compiled with \verb|-O3| flags using the \verb|g++ 9.4.0| compiler. Double precision floating point numbers were used for all individual scalars as well as for the template parameters of the Eigen matrices \cite{eigenweb}.

Before we discuss any results, let us examine Table \ref{tab:high_level_comparison}, which provides a summary of the pros and cons of each forecasting approach. 

\begin{center}
\begin{table}[]
\begin{tabular}{l|llll}
\cline{2-5}
                                                     & \multicolumn{1}{l|}{Liu-West (vers. 1)}            & \multicolumn{1}{l|}{Liu-West (vers. 2)}            & \multicolumn{1}{l|}{Particle Swarm}                & \multicolumn{1}{l|}{Particle Filter}               \\ \hline
\multicolumn{1}{|l|}{considers $\theta$ uncertainty} & \cellcolor[HTML]{009901}{\color[HTML]{000000} yes} & \cellcolor[HTML]{009901}{\color[HTML]{000000} yes} & \cellcolor[HTML]{009901}{\color[HTML]{000000} yes} & \cellcolor[HTML]{CB0000}{\color[HTML]{000000} no}  \\ \cline{1-1}
\multicolumn{1}{|l|}{provides $\theta$ samples}      & \cellcolor[HTML]{009901}{\color[HTML]{000000} yes} & \cellcolor[HTML]{009901}{\color[HTML]{000000} yes} & \cellcolor[HTML]{009901}{\color[HTML]{000000} yes} & \cellcolor[HTML]{CB0000}{\color[HTML]{000000} no}  \\ \cline{1-1}
\multicolumn{1}{|l|}{targets ppd}                    & \cellcolor[HTML]{009901}{\color[HTML]{000000} yes} & \cellcolor[HTML]{009901}{\color[HTML]{000000} yes} & \cellcolor[HTML]{009901}{\color[HTML]{000000} yes} & \cellcolor[HTML]{CB0000}{\color[HTML]{000000} no}  \\ \cline{1-1}
\multicolumn{1}{|l|}{requires auxiliary MCMC}        & \cellcolor[HTML]{009901}{\color[HTML]{000000} no}  & \cellcolor[HTML]{009901}{\color[HTML]{000000} no}  & \cellcolor[HTML]{CB0000}{\color[HTML]{000000} yes} & \cellcolor[HTML]{CB0000}{\color[HTML]{000000} yes} \\ \cline{1-1}
\multicolumn{1}{|l|}{supporting ppd asymptotics}     & \cellcolor[HTML]{CB0000}{\color[HTML]{000000} no}  & \cellcolor[HTML]{CB0000}{\color[HTML]{000000} no}  & \cellcolor[HTML]{009901}{\color[HTML]{000000} yes} & \cellcolor[HTML]{CB0000}{\color[HTML]{000000} no}  \\ \cline{1-1}
\multicolumn{1}{|l|}{pleasingly parallel}                 & \cellcolor[HTML]{CB0000}{\color[HTML]{000000} no}  & \cellcolor[HTML]{CB0000}{\color[HTML]{000000} no}  & \cellcolor[HTML]{009901}{\color[HTML]{000000} yes} & \cellcolor[HTML]{CB0000}{\color[HTML]{000000} no}  \\ \cline{1-1}
\end{tabular}
\caption{\label{tab:high_level_comparison}A high-level comparison of all forecasting algorithms.}
\end{table}
\end{center}
%-----------------------------------------------------------------------------%
\subsection{Rough Timing}\label{sec:rough_timing}

Here we give a general sense of how long each algorithm takes to run. The runtimes are displayed in Table \ref{tab:comparison}. We run the following filters across the entire data set $y_{1:T}$ (3164 observations):

\begin{enumerate}[topsep=0pt,itemsep=-1ex,partopsep=1ex,parsep=1ex]
\item SISR algorithm: $100$ state particles, $\theta$ set to posterior mean of MCMC samples
\item Liu-West version 1 (auxiliary style): $500$ state and parameter particles, $\theta$ parameters drawn from MCMC samples, $\delta = .99$
\item Liu-West version 1 (auxiliary style): $500$ state and parameter particles, $\theta$ parameters drawn from independent (and informative) uniform posteriors, $\delta = .99$
\item Liu-West version 2: $500$ state and parameter particles, $\theta$ parameters drawn from MCMC samples, $\delta = .99$
\item Liu-West version 2: $500$ state and parameter particles, $\theta$ parameters drawn from independent (and informative) uniform posteriors, $\delta = .99$
\item Particle Swarm: $100$ state particles, $100$ parameter particles, $\theta$ parameters drawn from MCMC samples, multithreading across 8 cores
\item Particle Swarm: $100$ state particles, $100$ parameter particles, $\theta$ parameters drawn from independent (and informative) uniform posteriors, multithreading across 8 cores
\end{enumerate}

\begin{table}[]
\centering
\begin{tabular}{lll}
\cline{2-2}
\multicolumn{1}{l|}{}                    & \multicolumn{1}{l|}{Elapsed Time (seconds)} &  \\ \cline{1-2}
\multicolumn{1}{|l|}{Liu-West (vers. 1), MCMC samples} & \multicolumn{1}{l|}{9.91}            &  \\ \cline{1-2}
\multicolumn{1}{|l|}{Liu-West (vers. 1), uniform samples} & \multicolumn{1}{l|}{8.46}            &  \\ \cline{1-2}
\multicolumn{1}{|l|}{Liu-West (vers. 2), MCMC samples} & \multicolumn{1}{l|}{4.64}            &  \\ \cline{1-2}
\multicolumn{1}{|l|}{Liu-West (vers. 2), uniform samples} & \multicolumn{1}{l|}{4.60}            &  \\ \cline{1-2}
\multicolumn{1}{|l|}{Particle Swarm, MCMC samples}     & \multicolumn{1}{l|}{2.42}            &  \\ \cline{1-2}
\multicolumn{1}{|l|}{Particle Swarm, uniform samples}     & \multicolumn{1}{l|}{2.90}            &  \\ \cline{1-2}
\multicolumn{1}{|l|}{Particle Filter}    & \multicolumn{1}{l|}{0.56}              &  \\ \cline{1-2}
                                         &                                       &  \\
                                         &                                       &  \\
                                         &                                       & 
\end{tabular}
\caption{\label{tab:comparison}Elapsed time on all filtering algorithms.}
\end{table}
Both Liu-West filters and the particle swarm filter are run with different posterior sampling strategies. One strategy samples directly from a hard-coded distribution, while the other draws from posterior samples stored in a file. The difference in runtimes gives some sense of how long it takes to read from that external file. The support of the marginal uniform posteriors was taken directly from the MCMC samples--the left and right bounds correspond with 95\% credible intervals. More information about MCMC priors, tuning and diagnostics can be found in Appendix \ref{sec:appendixB}. 

In addition to these filtering programs, which sweep through the data once, we also ran 100,000 iterations of the pseudo-marginal MCMC sampler (Algorithm \ref{alg:pmmh}) on the training data $y_{1:s}$. After all, as was mentioned in sections \ref{subsec:particle_filtering} and \ref{sec:pswarm}, the particle swarm filter and the standard SISR filter need to be initialized with posterior estimates or samples. This sampling took 1.1 hours.

Each iteration of this MCMC sampler runs through the training data set once using 7 SISR filters (Algorithm \ref{alg:sisr}) and averages their likelihoods together. The particle filters are run in parallel, and this averaging increases the efficiency of the algorithm. Averaging doesn't just attentuate noise--it preserves the unbiasedness of the likelihood, which is a core requirement of the pseudo-marginal approach \cite{andrieu2009}.

%-----------------------------------------------------------------------------%
\subsection{Comparison of Posterior Samples}\label{sec:posterior_samp_comp}

Next, a comparison of posterior samples is provided. For this experiment, we use the training data $y_{1:s}$ (251 observations) containing of all the returns from the year 2010. We run three programs for this portion of the case study:
\begin{enumerate}[topsep=0pt,itemsep=-1ex,partopsep=1ex,parsep=1ex]
\item Pseudo-marginal MCMC (algorithm \ref{alg:pmmh}): $7$ bootstrap particle filters run in parallel at each iteration, each using $100$ state particles (same as section \ref{sec:rough_timing})
\item Liu-West vers. 1 (auxiliary style): $500$ state and parameter particles, $\delta = .99$
\item Liu-West vers. 2: $500$ state and parameter particles, $\delta = .99$
\end{enumerate}
We use the same prior in all approaches--the same one used in \cite{Yu2005}. All four parameters are assumed to be independent, a priori, and 
\begin{enumerate}
    \item $\pi([\phi+1]/2) = \text{Beta}(20,1.5)$
    \item $\pi(\mu) = \text{Normal}(0, 25)$
    \item $\pi(\sigma^2) = \text{Inverse-Gamma}(2.5, .025)$
    \item $\pi(\rho) = \text{Uniform}(-1,1)$.
\end{enumerate}

The Liu-West filters are each run $100$ times. Boxplots for the marginal posteriors are displayed in Figures \ref{fig:many_posts_1} and \ref{fig:many_posts_2}. It is no surprise that, given the difference in runtimes and theoretical guarantees, that the parameter samples of the Liu-West filters are quite inferior to the ones obtained by the PMMH algorithm. 

% This schedule is retained for all studies with the exception of one--the investigation of LW parameter posterior samples. In that study, before we filter through the observed data, parameters are sampled from noninformative uniform priors. More details about these ``priors" can be found in Appendix \ref{sec:appendixc}.

\begin{figure}[h]
     \centering
     \begin{subfigure}[b]{0.45\textwidth}
         \centering
         \includegraphics[width=\textwidth]{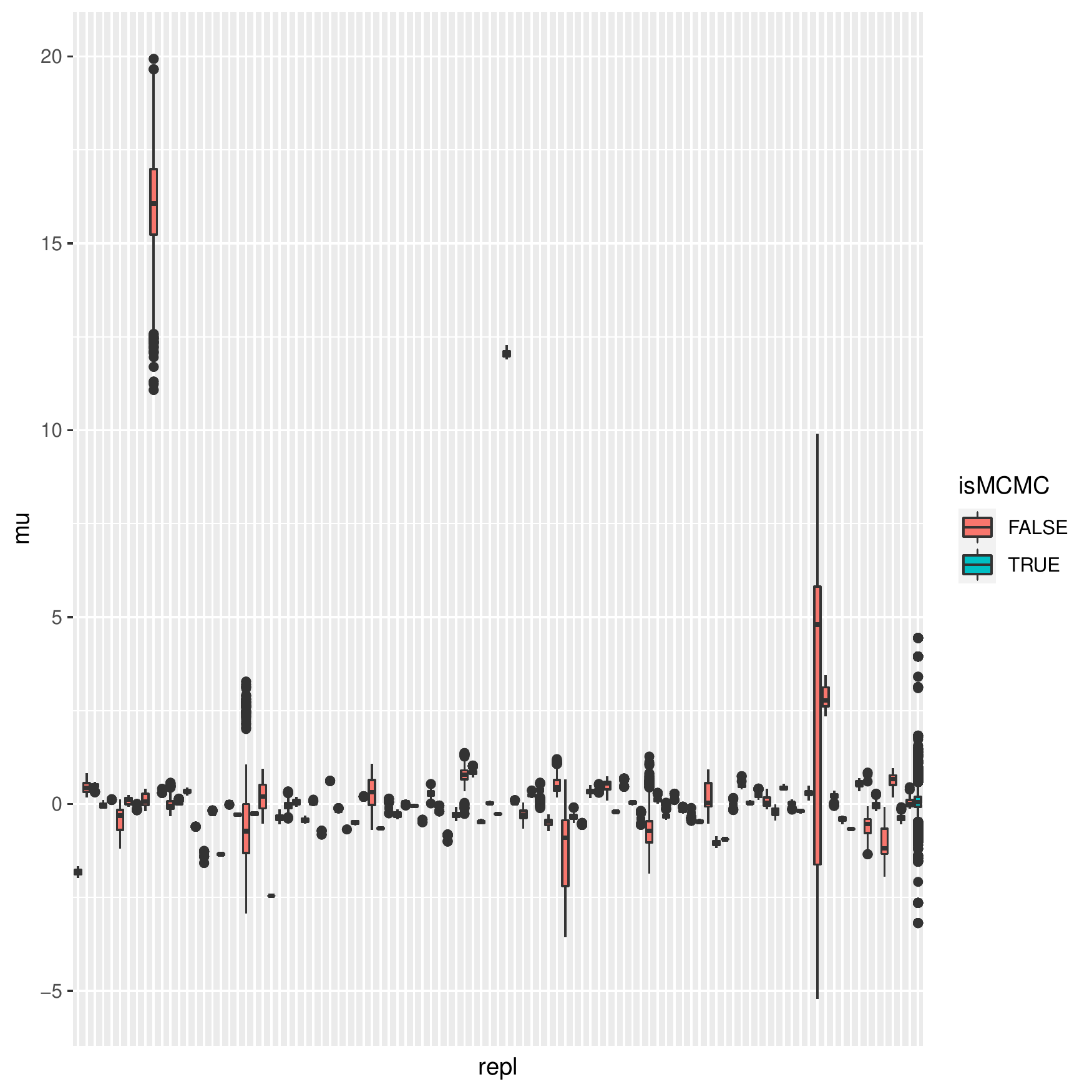}
         \caption{$100$ replicates $\hat{p}(\mu \mid y_{1:s})$ samples}
         \label{fig:many_mu_posts_1}
     \end{subfigure}
     \hfill
     \begin{subfigure}[b]{0.45\textwidth}
         \centering
         \includegraphics[width=\textwidth]{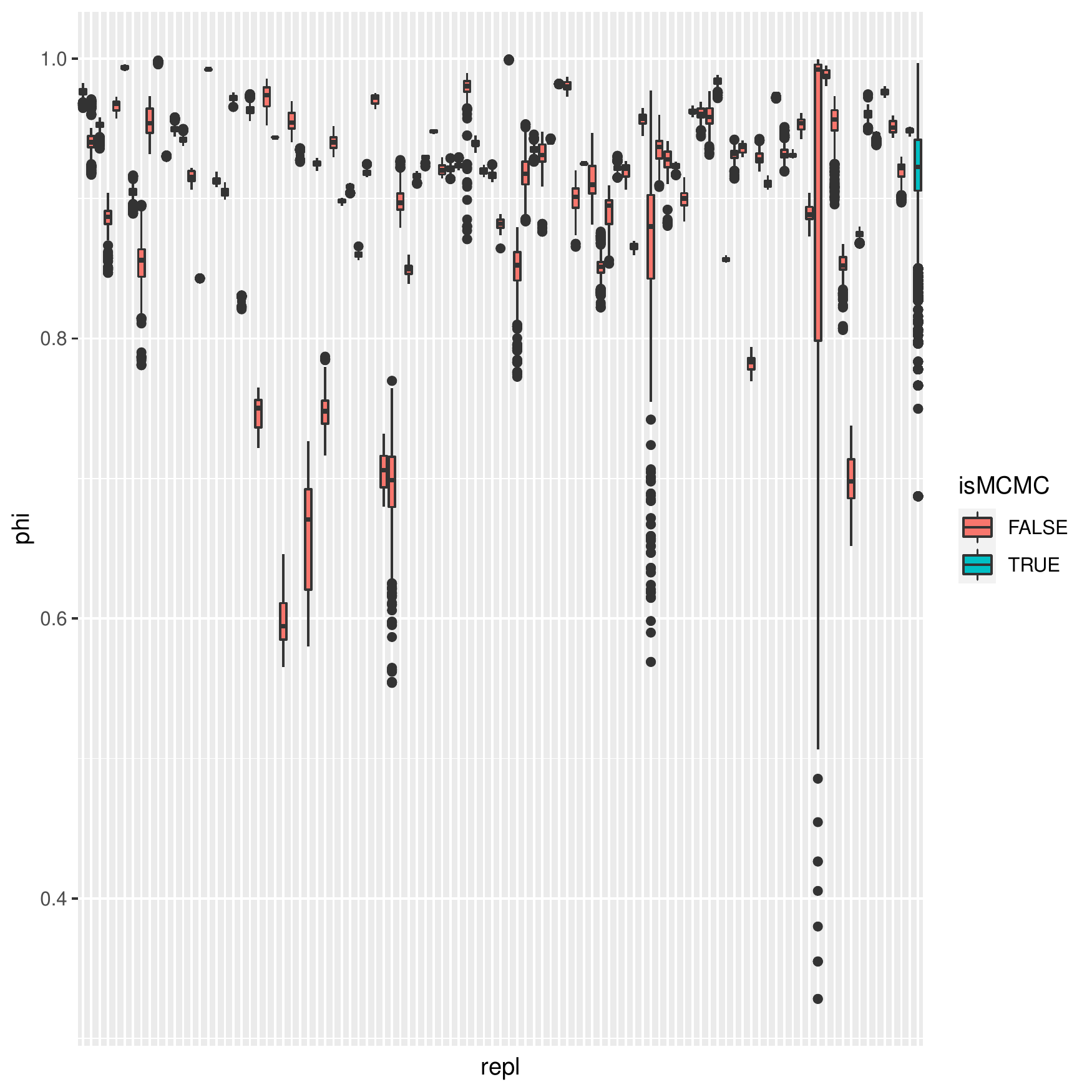}
         \caption{$100$ replicates $\hat{p}(\phi \mid y_{1:s})$ samples}
         \label{fig:many_phi_posts_1}
     \end{subfigure}
     \begin{subfigure}[b]{0.45\textwidth}
         \centering
         \includegraphics[width=\textwidth]{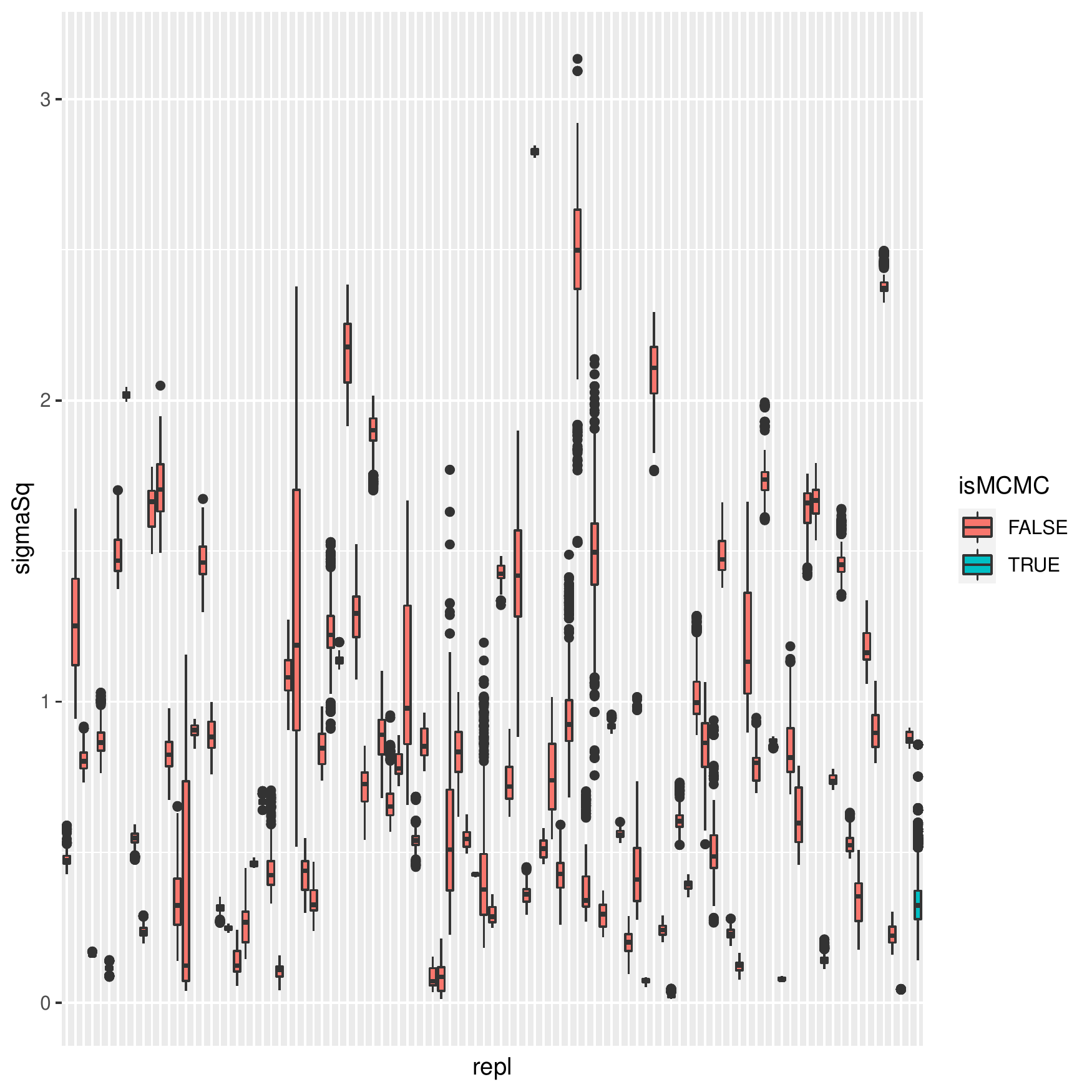}
         \caption{$100$ replicates $\hat{p}(\sigma^2 \mid y_{1:s})$ samples}
         \label{fig:many_sigmaSq_posts_1}
     \end{subfigure}
     \hfill
     \begin{subfigure}[b]{0.45\textwidth}
         \centering
         \includegraphics[width=\textwidth]{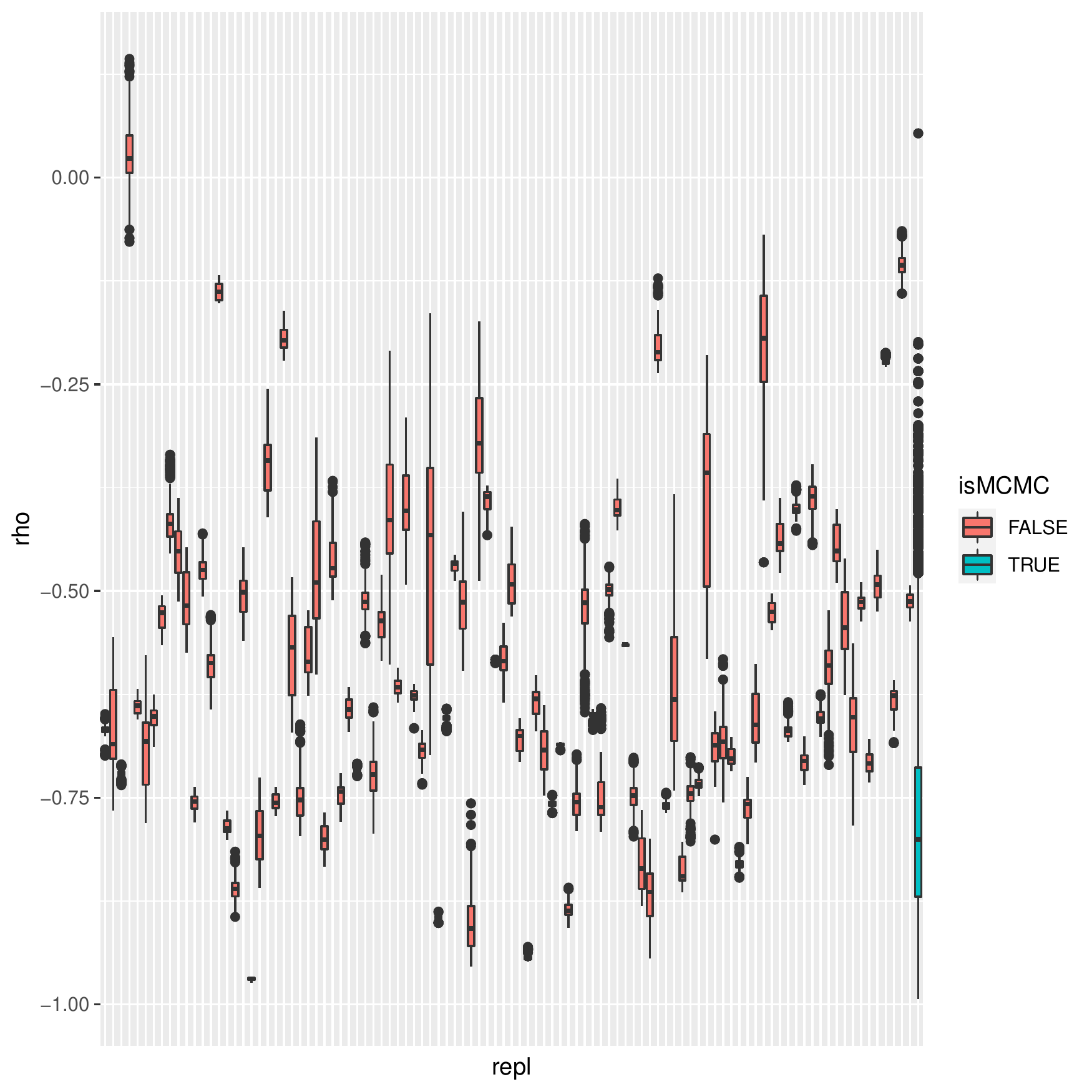}
         \caption{$100$ replicates $\hat{p}(\rho \mid y_{1:s})$ samples}
         \label{fig:many_rho_posts_1}
     \end{subfigure}
        \caption{Marginal posterior approximations from version $1$ of the Liu-West filter compared with MCMC samples.}
        \label{fig:many_posts_1}
\end{figure}

\begin{figure}[h]
     \centering
     \begin{subfigure}[b]{0.45\textwidth}
         \centering
         \includegraphics[width=\textwidth]{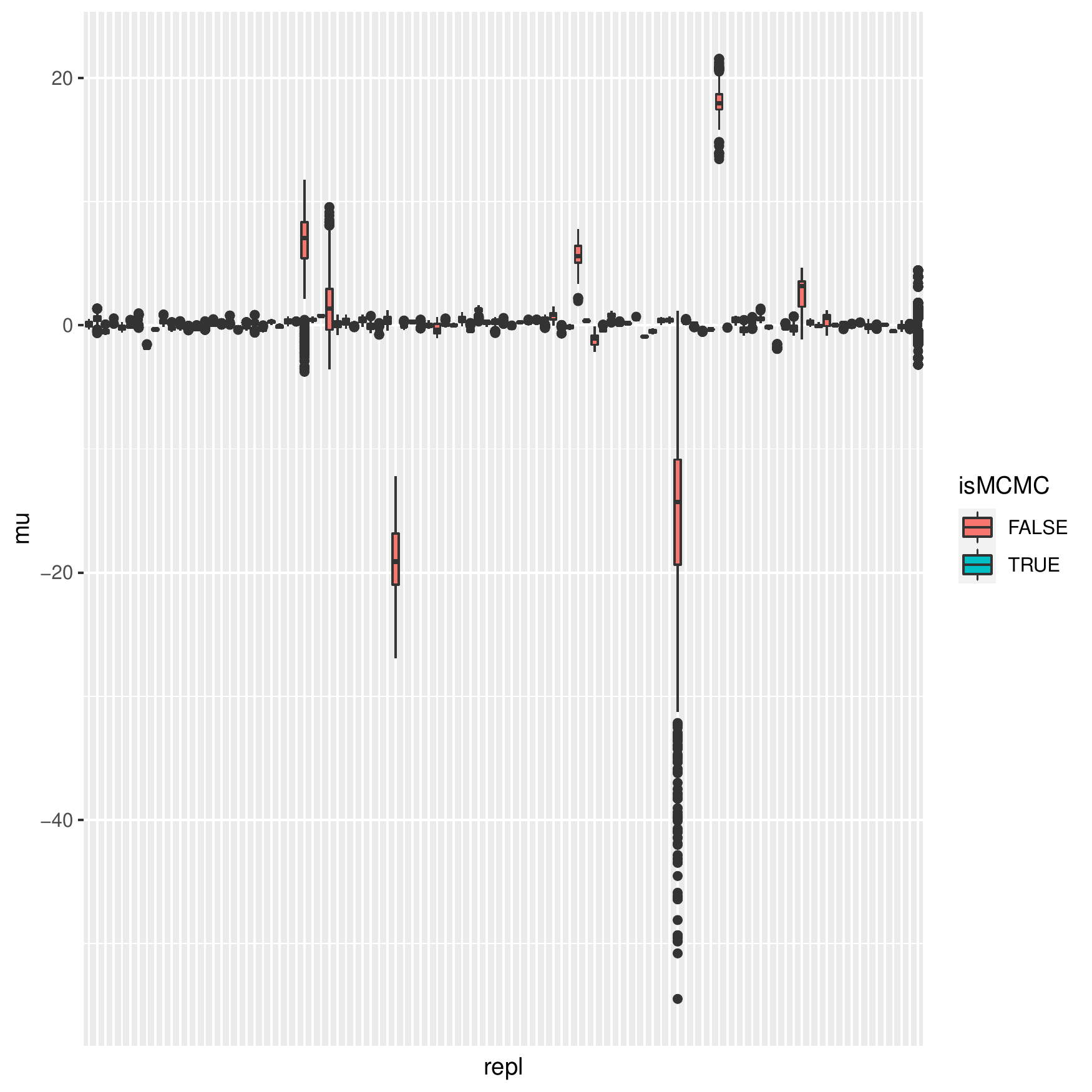}
         \caption{$100$ replicates $\hat{p}(\mu \mid y_{1:s})$ samples}
         \label{fig:many_mu_posts_2}
     \end{subfigure}
     \hfill
     \begin{subfigure}[b]{0.45\textwidth}
         \centering
         \includegraphics[width=\textwidth]{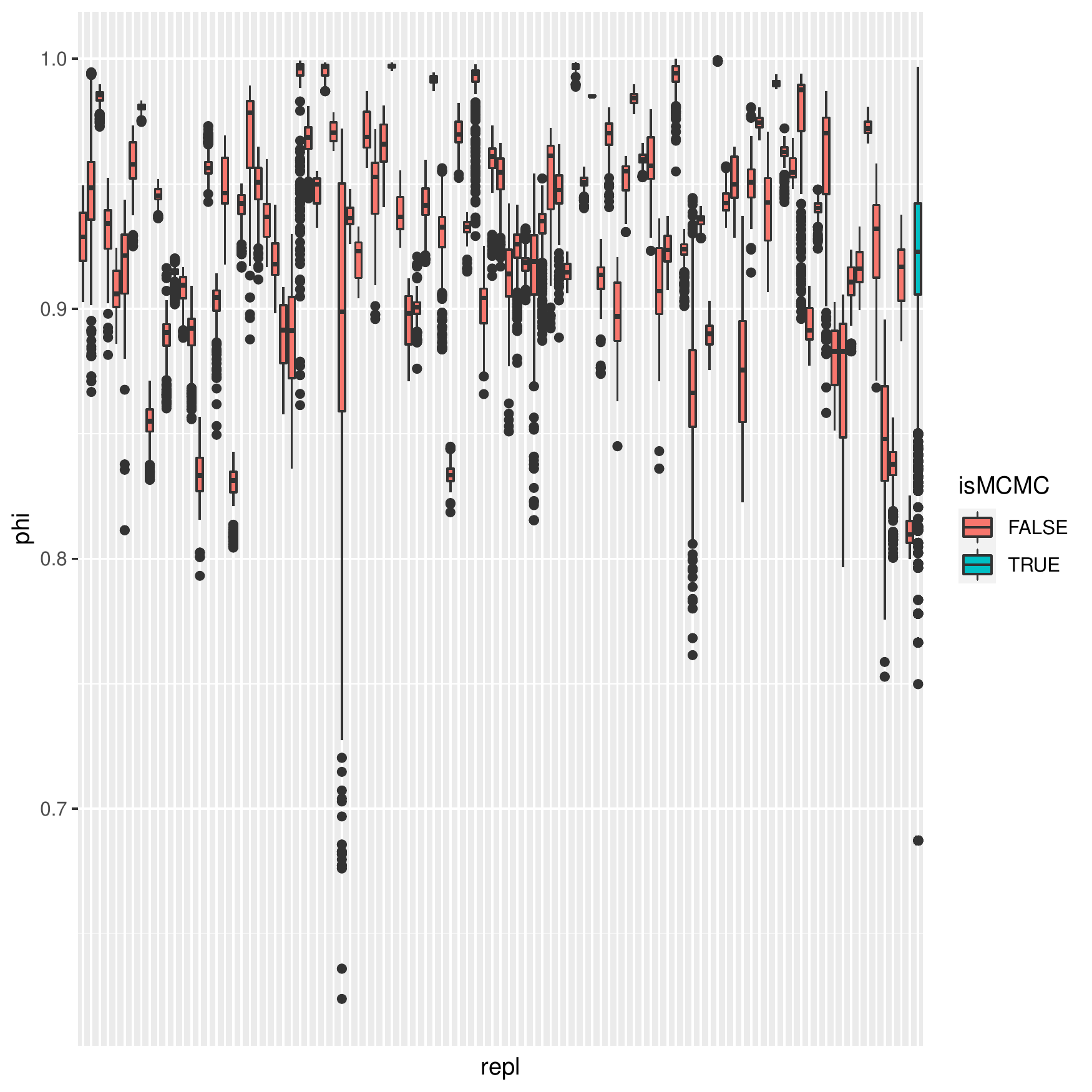}
         \caption{$100$ replicates $\hat{p}(\phi \mid y_{1:s})$ samples}
         \label{fig:many_phi_posts_2}
     \end{subfigure}
     \begin{subfigure}[b]{0.45\textwidth}
         \centering
         \includegraphics[width=\textwidth]{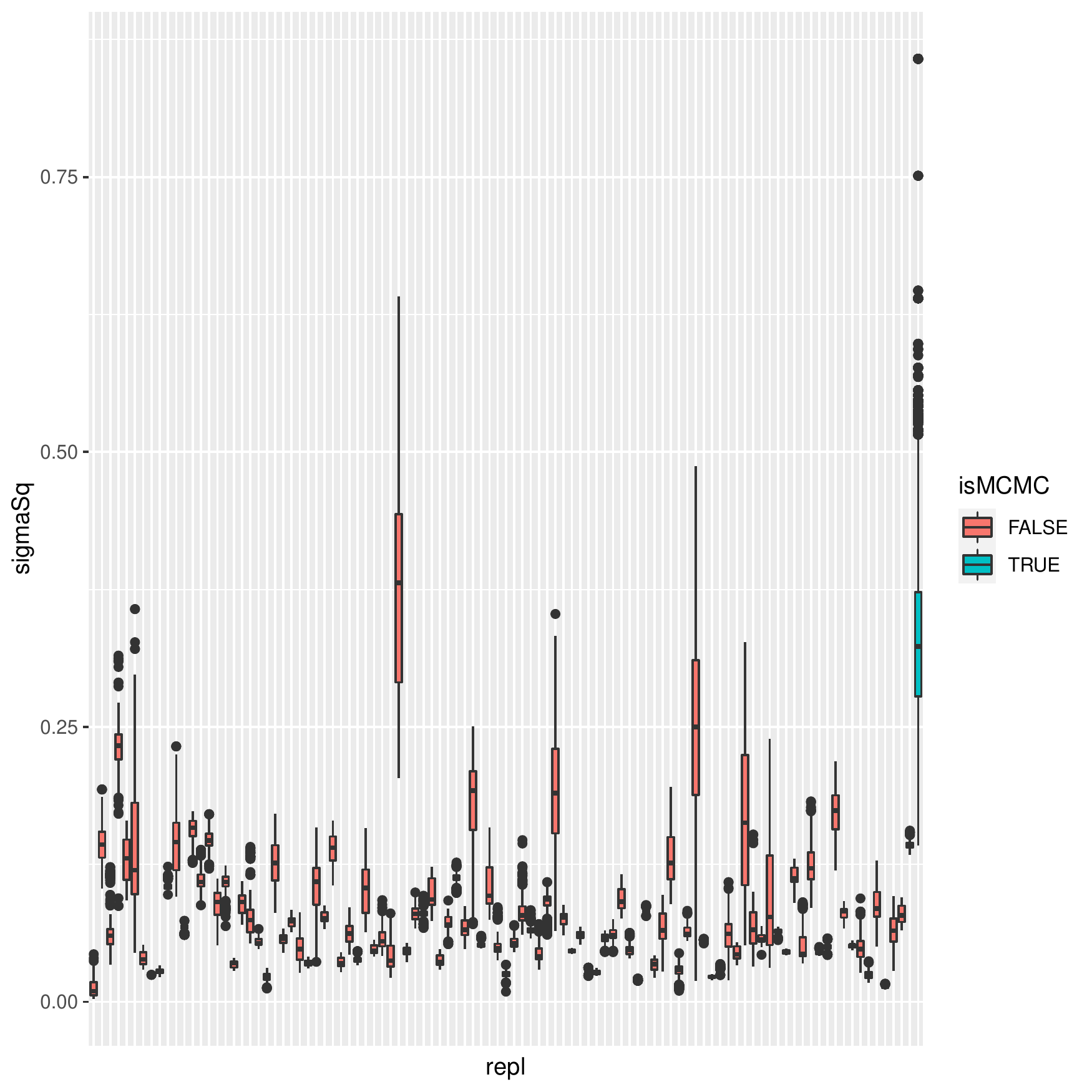}
         \caption{$100$ replicates $\hat{p}(\sigma^2 \mid y_{1:s})$ samples}
         \label{fig:many_sigma_posts_2}
     \end{subfigure}
     \hfill
     \begin{subfigure}[b]{0.45\textwidth}
         \centering
         \includegraphics[width=\textwidth]{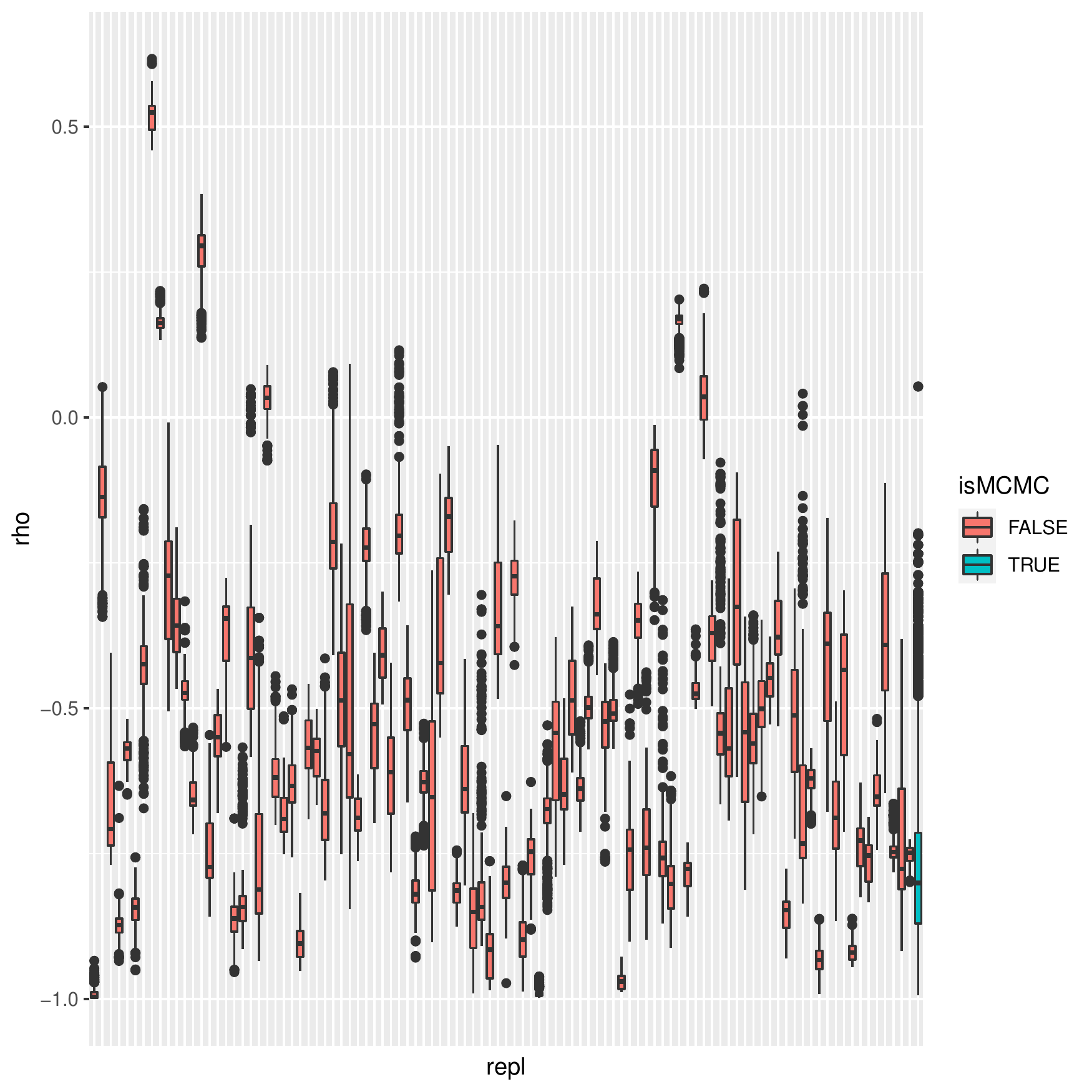}
         \caption{$100$ replicates $\hat{p}(\rho \mid y_{1:s})$ samples}
         \label{fig:many_rho_posts_2}
     \end{subfigure}
        \caption{Marginal posterior approximations from version $2$ of the Liu-West filter compared with MCMC samples.}
        \label{fig:many_posts_2}
\end{figure}

% \begin{tabular}{|c|c|}
%       \hline
%       \addheight{\includegraphics[width=60mm]{}} &
%       \addheight{\includegraphics[width=60mm]{}} \\
%       \addheight{\includegraphics[width=60mm]{images/sigma_post_comparison.pdf}} &
%       \addheight{\includegraphics[width=60mm]{}} \\
%       \hline
% \end{tabular}

%-----------------------------------------------------------------------------%
\subsection{Comparison of Filter Means}\label{sec:comp_filt_means}

This section provides a comparison of our forecast volatilities, also known as the estimates of the filtered mean: $\mathbb{E}(x_{t} \mid y_{1:t})$. The filters are initialized with draws from the same starting posteriors used in section \ref{sec:rough_timing}.

All of the approaches seem to yield estimates that move in tandem, and all of the filtered mean sequences seem to increase when the dispersion of returns increases (see Figure \ref{fig:filter_mean_and_returns}). 

As a reminder, all three algorithms are ``biased," but for different reasons. The SISR filter does not take into account uncertainty, the particle swarm filter uses an outdated parameter posterior, and the Liu-West filters are filtering for different models. Despite all of these differences, all three approaches seem to produce equivalent volatility estimates.

\begin{figure}
     \centering
     \begin{subfigure}[b]{0.45\textwidth}
         \centering
         \includegraphics[width=\textwidth]{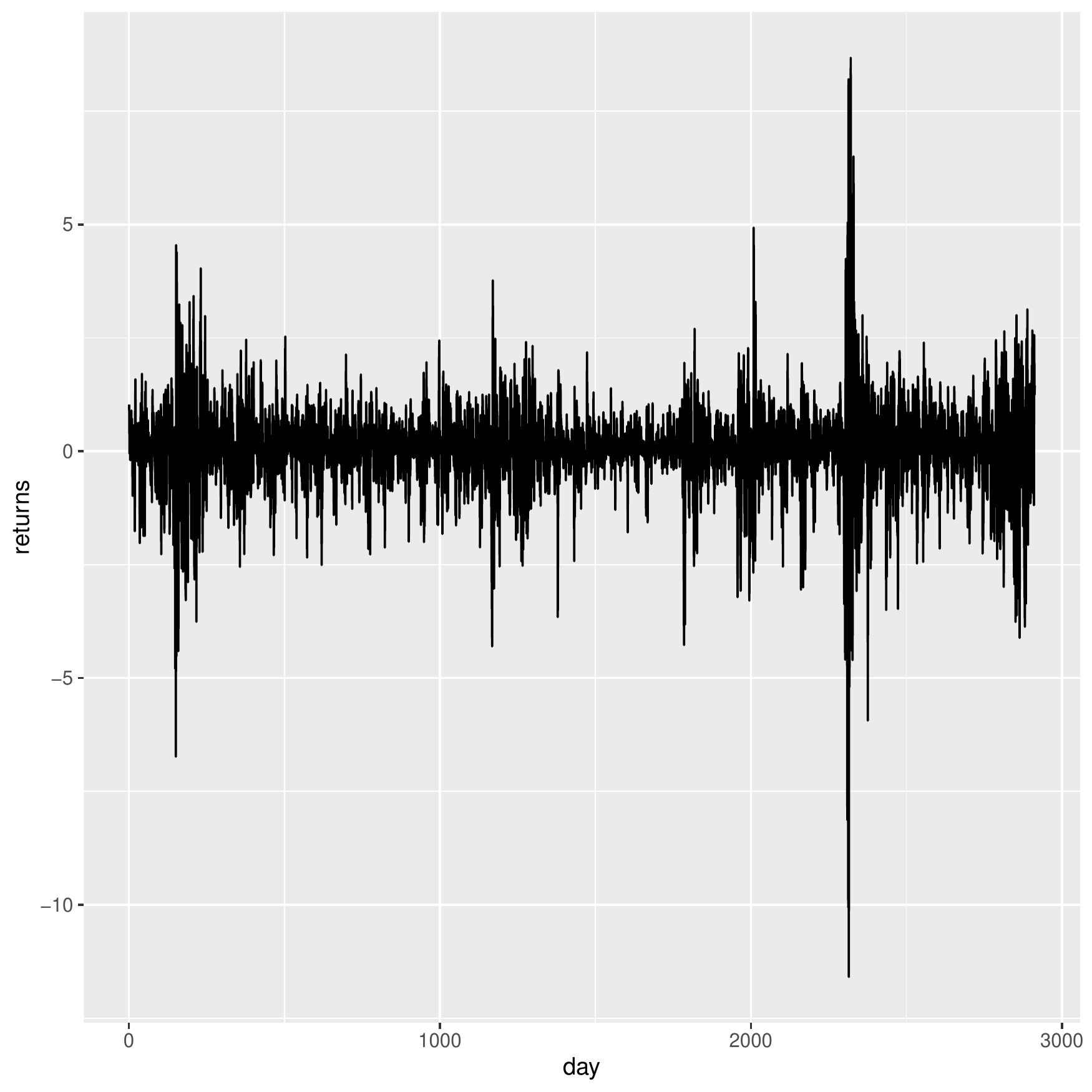}
         \caption{observed returns in test set $y_{s+1:T}$}
         \label{fig:obs_returns}
     \end{subfigure}
     \hfill
     \begin{subfigure}[b]{0.45\textwidth}
         \centering
         \includegraphics[width=\textwidth]{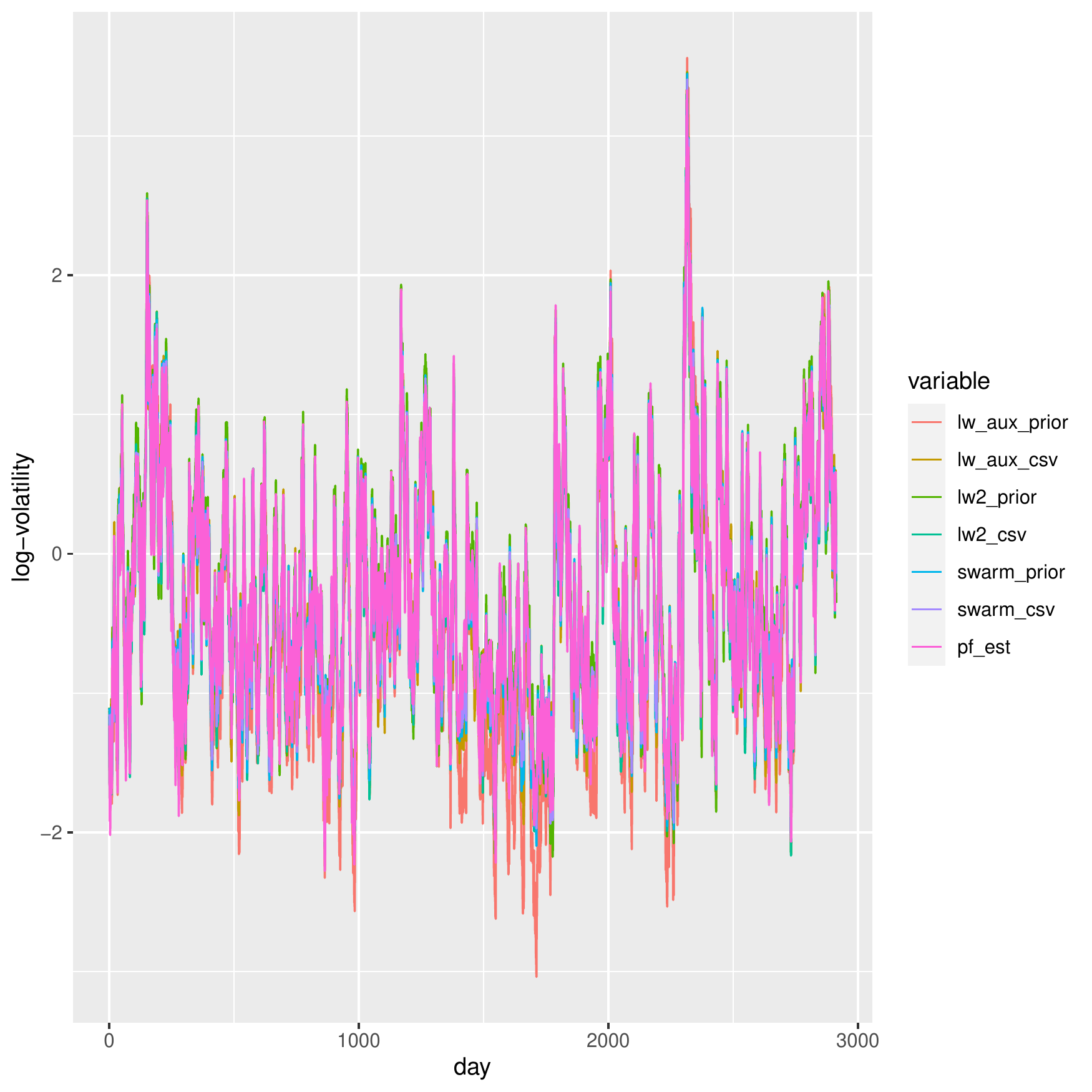}
         \caption{$\hat{\mathbb{E}}(x_t \mid y_{1:t})$}
         \label{fig:filt_means}
     \end{subfigure}
        \caption{Approximate ``volatility" for all algorithms.}
        \label{fig:filter_mean_and_returns}
\end{figure}

%-----------------------------------------------------------------------------%
\subsection{Comparison of Evidence Approximations}

A holistic way to evaluate forecasts is to use a \emph{scoring rule}, which measures the value of a forecast by looking at its entire distribution, instead of just its moments. The value is computed by comparing the distribution against the observed return that comes to pass \cite{scoring_rules}. The logarithmic scoring rule $p(y_t \mid y_{1:t-1})$, otherwise known as the conditional evidence, is arguably the most important because of its familiarity and mathematical properties \cite{log_scoring_rle}. Unfortunately, for the class of models that we are considering, only approximate evaluations are available for this function at each point in time. 

Theoretically, each algorithm is being run on the same data set, with the same model, using the same priors. If the approximations were perfect, all of these quantities would be equivalent. This section investigates any discrepancies between approximations. The approximations to the (log of the) conditional evidences are shown in Figure \ref{fig:log_clikes}. The cumulative sums of these are shown in Figure \ref{fig:log_clikes_cumsum}.

Recall that the particle swarm filter provides for the opportunity to renew parameter samples. However, in an attempt to disadvantage the particle filter and the particle swarm filter as much as possible, their parameter posterior samples are never updated--only the first set of parameter samples obtained from the training data are used for the life of the experiment. This decision is also made in the description in Algorithm \ref{alg:pswarm_filter}. The particle swarm filter makes use of these raw parameter posterior samples, while the SISR particle filter is instantiated with the posterior mean vector. 

\begin{figure}
     \centering
     \begin{subfigure}[b]{0.45\textwidth}
         \centering
         \includegraphics[width=\textwidth]{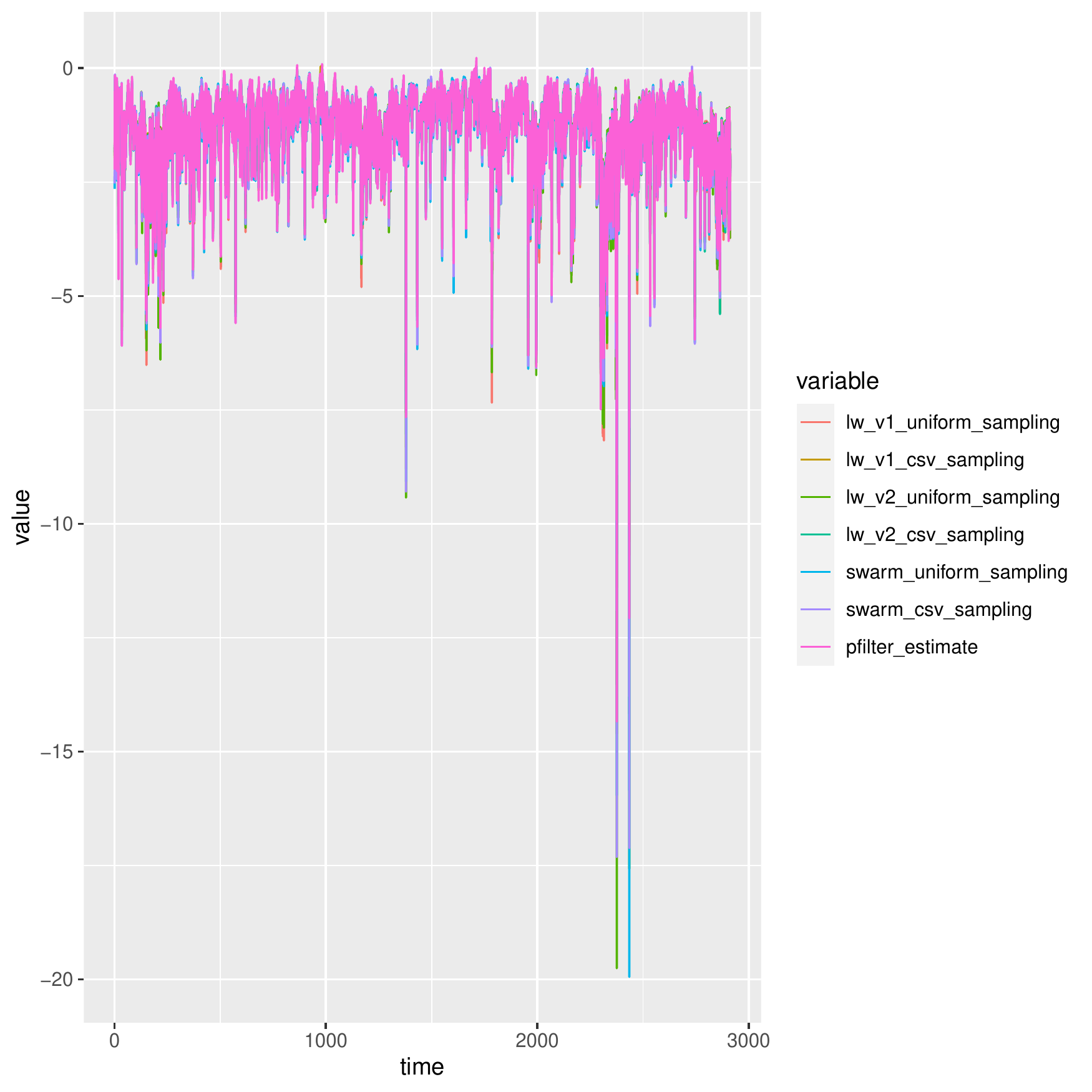}
         \caption{$\log \hat{p}(y_t \mid y_{1:t-1})$ for $t > s$}
         \label{fig:log_clikes}
     \end{subfigure}
     \hfill
     \begin{subfigure}[b]{0.45\textwidth}
         \centering
         \includegraphics[width=\textwidth]{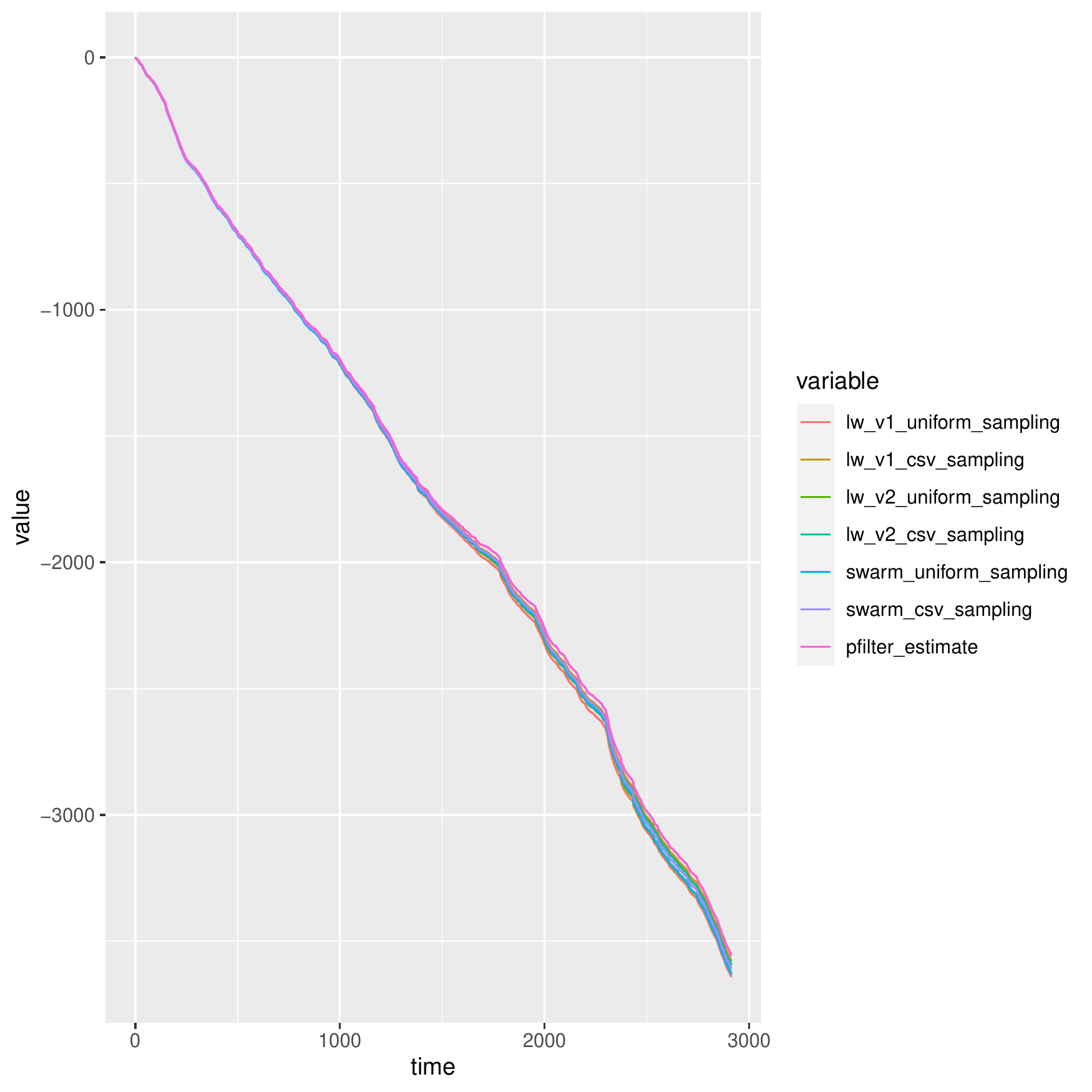}
         \caption{$\log \hat{p}(y_{s+1:t} \mid y_{1:s})$ for $t > s$}
         \label{fig:log_clikes_cumsum}
     \end{subfigure}
        \caption{Approximate log conditional evidences for all algorithms.}
        \label{fig:three graphs}
\end{figure}

Next, we visualize the relative size of each conditional evidence provided by each algorithm. Compare the two Liu-West filters and the particle swarm filter with each other. For each algorithm, at each time, take its conditional evidence, and divide by the sum of all three conditional evidences. We can do this twice--one for comparing the algorithms when sampling is performed from a text file, and one for comparing the algorithms when sampling is performed from a hardcoded distribution. 

Define $\mathbf{r}_t$ and $\mathbf{s}_t$ to each be vector-valued time series evolving on the (3-1)-simplex with the $i$th element proportional to $\hat{p}_i(y_t \mid y_{1:t-1})$. Figure \ref{fig:relative_cond_likes} plots these two time series in ternary plots. These plots show that all three algorithms tend to produce equivalent values throughout the entire time sequence.

\begin{figure}
     \centering
     \begin{subfigure}[b]{0.45\textwidth}
         \centering
         \includegraphics[width=\textwidth]{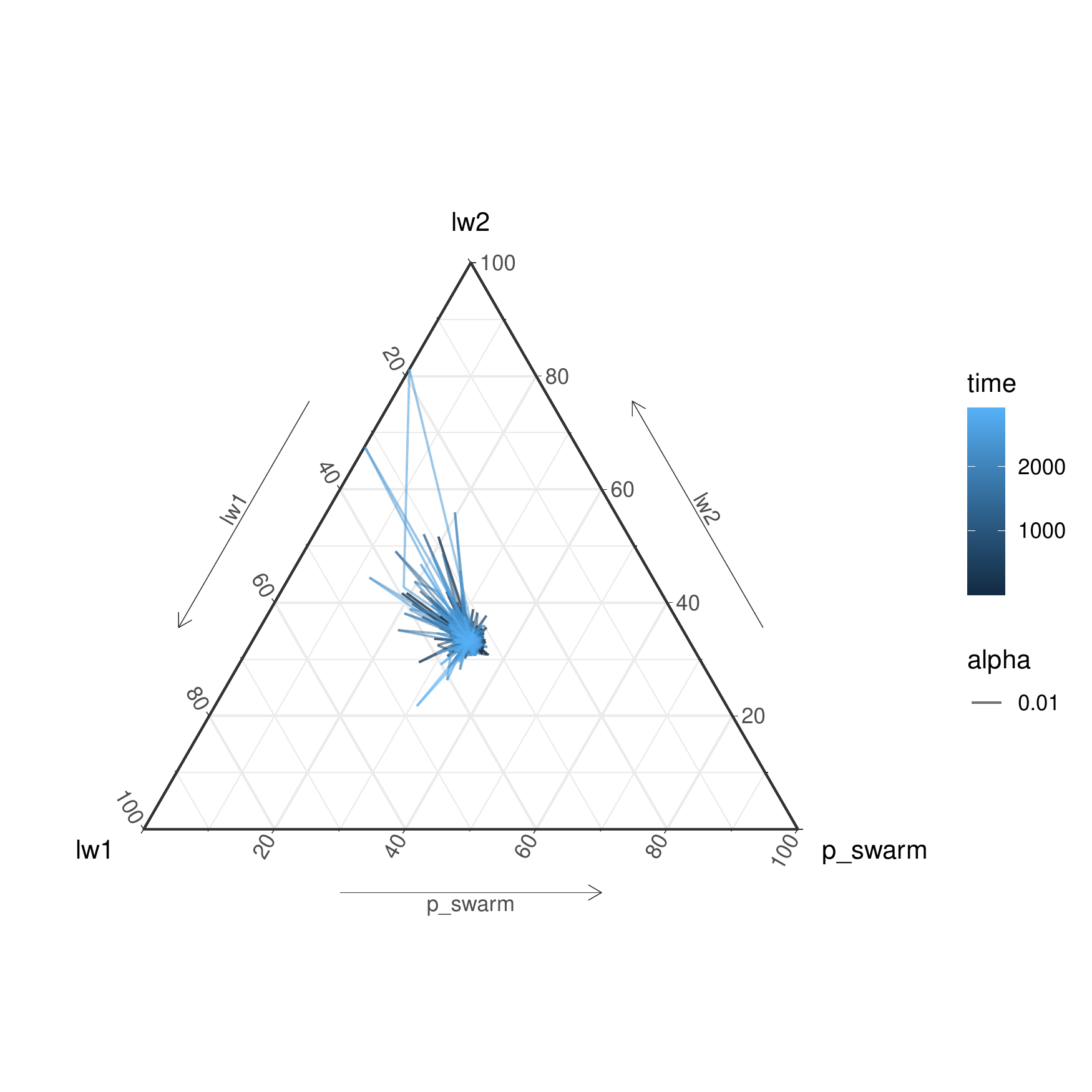}
         \caption{$\mathbf{r}_t$ for csv sampling}
         \label{fig:rel_csv_cond_likes}
     \end{subfigure}
     \hfill
     \begin{subfigure}[b]{0.45\textwidth}
         \centering
         \includegraphics[width=\textwidth]{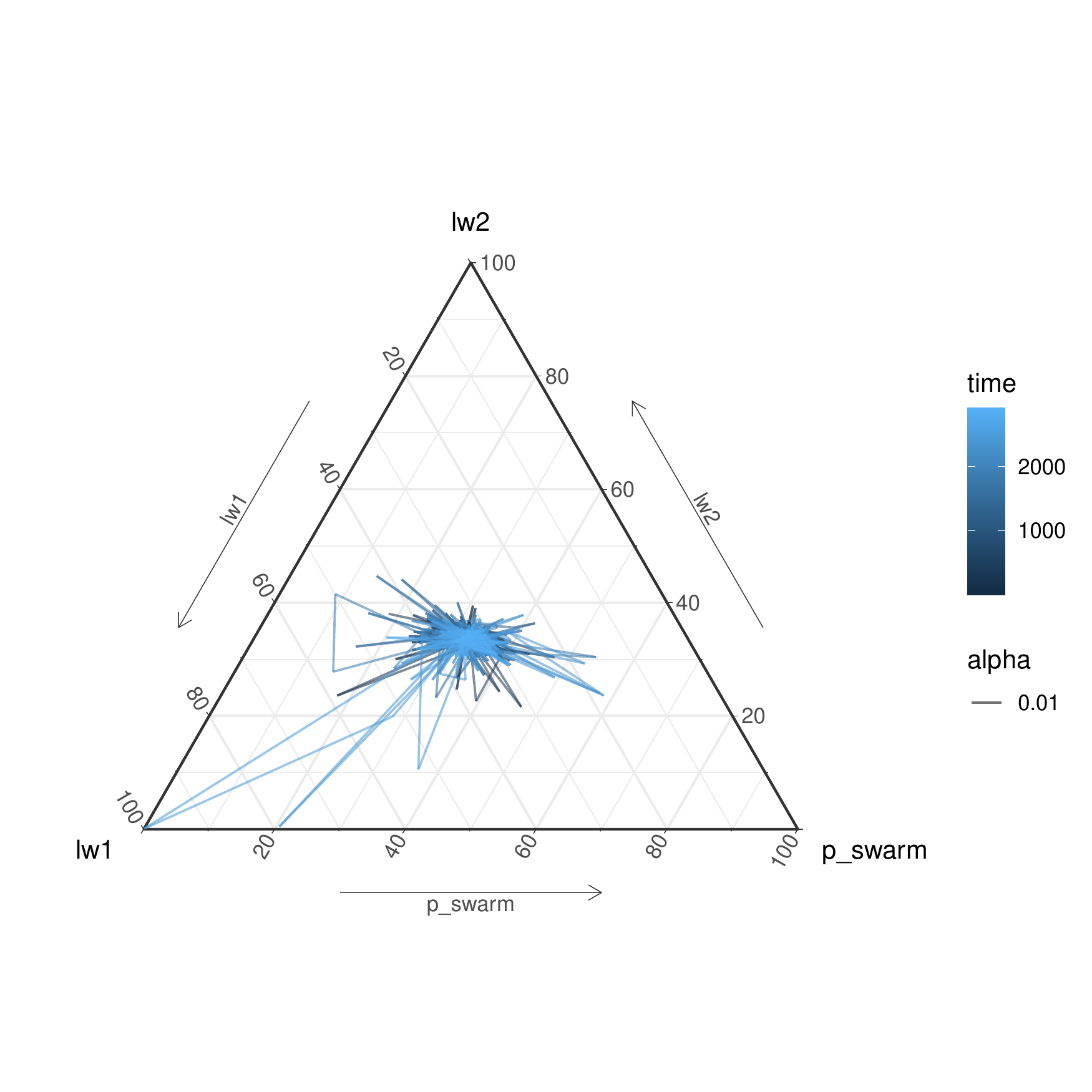}
         \caption{$\mathbf{s}_t$ for uniform sampling}
         \label{fig:rel_uniform_cond_likes}
     \end{subfigure}
        \caption{Comparing relative size of conditional likelihoods.}
        \label{fig:relative_cond_likes}
\end{figure}

%-----------------------------------------------------------------------------%
%-----------------------------------------------------------------------------%
\section{Conclusion}
\label{sec:conc}

In this paper, we explored the output generated by several forecasting algorithms. We considered ourselves to be Bayesian and we restricted our interest to sampling-based approaches that were suitable for real-time applications. We initially suspected that any differences might be explained by either how parameter uncertainty is taken into account, or by the presence of artificial parameter evolution noise. Despite these differences, we find that all the algorithms we considered in this paper produce similar forecasts. In addition, we find that the parameter posterior samples from the Liu-West filters are a poor substitute for those coming from a Markov chain Monte Carlo algorithm. 

%-----------------------------------------------------------------------------%
%-----------------------------------------------------------------------------%

\begin{appendices}

\section{Appendix A: Description of Some Algorithms}\label{sec:appendixa}

\begin{algorithm}
\caption{PMMH}\label{alg:pmmh}
{\fontsize{10}{4}\selectfont
\begin{algorithmic}[0]
\Procedure{pmmh}{}%\Comment{Use $p$ to approximate $q$}
  \For{$i=1,\ldots, N$}
    \If{$i$ equals $1$}
      \State Choose or sample $\theta^1$ such that $p(y_{1:t}, \theta^1) > 0$    
      \State Store $\hat{p}(y_{1:t} \mid \theta^1)$ 
    \Else\Comment{$i > 1$}
        \State Sample $\theta' \sim q(\theta' \mid \theta^{i-1})$
        \State Sample $U_t \sim \text{Uniform}(0,1)$
        \State Compute $a(\theta,\theta') := \frac{p(\theta')\hat{p}(y_{1:t} \mid \theta') q(\theta^{i-1} \mid \theta') }{p(\theta^{i-1})\hat{p}(y_{1:t} \mid \theta^{i-1}) q(\theta' \mid \theta^{i-1}) }$
        \If{$U_t < \min(1, a(\theta,\theta'))$}
            \State Set $\theta^i = \theta'$
            \State Set $\hat{p}(y_{1:t} \mid \theta^i) = \hat{p}(y_{1:t} \mid \theta')$
        \Else{}
            \State Set $\theta^i = \theta^{i-1}$
            \State Set $\hat{p}(y_{1:t} \mid \theta^i) = \hat{p}(y_{1:t} \mid \theta^{i-1})$
        \EndIf
    \EndIf
  \EndFor
\EndProcedure
\end{algorithmic}
}
\end{algorithm}

% TODO, because we only need two containers--one for the recent weights and one for the recent samples--some of these if statements are unnecessary.

\begin{algorithm}
\caption{SISR}\label{alg:sisr}
{\fontsize{10}{4}\selectfont
\begin{algorithmic}[0]
\Procedure{sisr (at time $t$)}{}%\Comment{Use $p$ to approximate $q$}
  \If{$t$ equals $1$}
    \For{$i=1,\ldots, N$}
      \State Sample $\tilde{X}_1^i \sim q(x_1 \mid y_1, \theta)$  
      \State Compute $w(\tilde{X}_1^i) = \frac{g(y_1 \mid \tilde{X}^i_1)\mu(\tilde{X}_1^i \mid \theta)}{q(\tilde{X}_1^i \mid y_1, \theta)}$
    \EndFor
    \State Compute $\hat{p}(y_1) = N^{-1} \sum_{i=1}^N \tilde{w}(\tilde{X}_1^i)$ 
    \State Compute $\hat{\mathbb{E}}[f(x_1) \mid y_{1},\theta] = \sum_{i=1}^N \frac{w(\tilde{X}_1^i)}{ \sum_k w(\tilde{X}_1^k) } f(\tilde{X}_1^i)$ for any integrable $f$
    \If{resampling criterion satisfied}
    \For{$j=1,\ldots, N$} 
      \State Sample $I_1^j \sim \text{Cat}\left(
      \frac{w(\tilde{X}_1^1)}{ \sum_k w(\tilde{X}_1^k) }, 
      \ldots, 
      \frac{w(\tilde{X}_1^N)}{ \sum_k w(\tilde{X}_1^k) }
      \right) $ 
      \State Assign $X_1^j = \tilde{X}_1^{I_1^j}$
      \State Assign $w(X_1^j) = 1$
    \EndFor
    \Else \Comment{see note}
    \For{$j=1,\ldots, N$} 
      \State Assign $X_1^j = \tilde{X}_1^{j}$
      \State Assign $w(X_1^j) = w(\tilde{X}_1^j)$
    \EndFor
    \EndIf
  \Else\Comment{$t > 1$}
  \For{$i=1,\ldots N$}
    \State Sample $\tilde{X}_t^i \sim q(x_t \mid y_t, X_{t-1}^i, \theta)$ 
    \State Compute $w(\tilde{X}_t^i) = w(X_{t-1}^i) \frac{g(y_t \mid \tilde{X}^i_t)f(\tilde{X}_t^i \mid X^i_{t-1}, y_{t-1}, \theta)}{q(\tilde{X}_t^i \mid y_t, X_{t-1}^i, \theta)}$
  \EndFor
  \State Compute $\hat{p}(y_t \mid y_{1:t-1}) = N^{-1} \sum_{i=1}^N
  w(\tilde{X}_t^i)$ 
    \State Compute $\hat{\mathbb{E}}[f(x_t) \mid y_{1:t},\theta] = \sum_{i=1}^N \frac{w(\tilde{X}_t^i)}{ \sum_k w(\tilde{X}_t^k) } f(\tilde{X}_t^i)$for any integrable $f$ 
    \If{resampling criterion satisfied}
    \For{$j=1,\ldots, N$} 
      \State Sample $I_t^j \sim \text{Cat}\left(
      \frac{w(\tilde{X}_t^1)}{ \sum_k w(\tilde{X}_t^k) }, 
      \ldots, 
      \frac{w(\tilde{X}_t^N)}{ \sum_k w(\tilde{X}_t^k) }
      \right) $ 
      \State Assign $X_t^j := \tilde{X}_t^{I_t^j}$
    \EndFor
    \Else \Comment{see note}
    \For{$j=1,\ldots, N$} 
      \State Assign $X_t^j = \tilde{X}_t^{j}$
      \State Assign $w(X_t^j) = w(\tilde{X}_t^j)$
    \EndFor
    \EndIf
  \EndIf
\EndProcedure
\end{algorithmic}
}
\end{algorithm}

% TODO mention we resample particles at the end of every iteration but this can be changed in the code easily

% TODO mention we initially sample from some posterior in a variety of different ways

\begin{algorithm}
\caption{Liu-West Filter (auxiliary style and at time $t$)}\label{alg:lwfilter1}
{\fontsize{10}{4}\selectfont
\begin{algorithmic}[0]
\Procedure{auxiliary style Liu-West Filter }{}%\Comment{Use $p$ to approximate $q$}
    \If{$t=1$} 
        \State Store $a = (3\delta -1)/(2\delta) $ where $\delta$ is a user-chosen tuning parameter.
        \State Select $s$ and obtain posterior samples from $p(\theta \mid y_{1:s})$ ($s = 0$ corresponds with no posterior simulation)
        \For{$i=1,\ldots,N$} 
            \State Sample $\tilde{\theta}^i_1 \sim p(\theta \mid y_{1:s})$
            \State Sample $\tilde{X}_1^{i} \sim q(x_1 \mid y_1, \tilde{\theta}_1^i)$
            \State Compute $w(\tilde{X}_1^{i}, \tilde{\theta}_1^i) = \frac{g(y_1 \mid \tilde{X}_1^{i}, \tilde{\theta}^i_1)\mu(\tilde{X}_1^{i} \mid \tilde{\theta}^i_1)}{q(\tilde{X}_1^{i} \mid y_1, \tilde{\theta}^i_1)}$
        \EndFor
        \State Compute $\hat{p}(y_1) = N^{-1} \sum_{i} w(\tilde{X}_1^i, \tilde{\theta}_1^i)$
        \State Compute $\hat{\mathbb{E}}[f(x_1) \mid y_{1}] = \sum_{i=1}^N \frac{w(\tilde{X}_1^i, \tilde{\theta}_1^i)}{ \sum_j w(\tilde{X}_1^j, \tilde{\theta}_1^j) } f(\tilde{X}_1^i)$
        \If{resampling criterion satisfied}
        \For{$i=1,\ldots,N$}
                \State Sample $I_1^i$ with $P(I_1^i = k) \propto w(\tilde{X}_1^{k}, \tilde{\theta}_1^k)$, $k \in \{ 1, \ldots, N\}$
                \State Assign $X_1^i = \tilde{X}_1^{I_1^i}$ and $\theta_1^i = \tilde{\theta}_1^{I_1^i}$
                \State Assign $w(X_1^{i}, \theta_1^i) = 1$
        \EndFor
        \Else \Comment{see note}
        \For{$i=1,\ldots,N$}
            \State Assign $X_1^i = \tilde{X}_1^{i}$ and $\theta_1^i = \tilde{\theta}_1^{i}$
            \State Assign $w(X_1^{i}, \theta_1^i) = w(\tilde{X}_1^{i}, \tilde{\theta}_1^i) $
        \EndFor
        \EndIf
    \Else \Comment{ ($t > 1$)}
        \For{$i=1,\ldots,N$} 
            \State Compute $\mu_t^i = \mathbb{E}(x_t \mid x_{t-1}^i, y_{t-1}, \theta_{t-1}^i)$  
            \State Compute $m^i_{t-1} = a \theta_{t-1}^i + (1-a) N^{-1} \sum_{j} \theta_{t-1}^j$
        \EndFor
        \For{$i=1,\ldots,N$} 
            \State Sample $I_t^i$ with $P(I_t^i = k) \propto w(X_{t-1}^k, \theta_{t-1}^k)g(y_t \mid \mu_t^k, m_{t-1}^k)$, $k \in \{ 1, \ldots, N\}$
            \State Sample $\tilde{\theta}_t^i \sim \mathcal{N}(m_{t-1}^{I_t^i}, h^2 V_{t-1})$ where $V_t$ is the sample covariance matrix of the parameter particles.
            \State Sample $\tilde{X}_t^i \sim f(x_t \mid X_{t-1}^{I_t^i}, y_{t-1}, \tilde{\theta}_t^i)$
            \State Compute $w(\tilde{X}_t^i, \tilde{\theta}_t^i) = \frac{g(y_t \mid \tilde{X}_t^i, \tilde{\theta}_t^i) }{g(y_t \mid \mu_t^{I_t^i}, m_{t-1}^{I_t^i})}$
        \EndFor
        \State Compute $\hat{p}(y_t \mid y_{1:t-1}) = N^{-1} \sum_{i} w(\tilde{X}_t^i, \tilde{\theta}_t^i)$
        \State Compute $\hat{\mathbb{E}}[f(x_t) \mid y_{1:t}] = \sum_{i=1}^N \frac{w(\tilde{X}_t^i, \tilde{\theta}_t^i)}{ \sum_j w(\tilde{X}_t^j, \tilde{\theta}_t^j) } f(\tilde{X}_t^i)$
        \For{$i=1,\ldots,N$}
            \State derp
        \EndFor
        \If{resampling criterion satisfied}
        \For{$i=1,\ldots,N$}
                \State Sample $I_t^i$ with $P(I_t^i = k) \propto w(\tilde{X}_t^{k}, \tilde{\theta}_t^k)$, $k \in \{ 1, \ldots, N\}$
                \State Assign $X_t^i = \tilde{X}_t^{I_t^i}$ and $\theta_t^i = \tilde{\theta}_t^{I_t^i}$
                \State Assign $w(X_t^{i}, \theta_t^i) = 1$
        \EndFor
        \Else \Comment{see note}
        \For{$i=1,\ldots,N$}
            \State Assign $X_t^i = \tilde{X}_t^{i}$ and $\theta_t^i = \tilde{\theta}_t^{i}$
            \State Assign $w(X_t^{i}, \theta_t^i) = w(\tilde{X}_t^{i}, \tilde{\theta}_t^i) $
        \EndFor
        \EndIf
    \EndIf
\EndProcedure
\end{algorithmic}
}
\end{algorithm}

\begin{algorithm}
\caption{Liu-West Filter (alternative style and at time $t$)}\label{alg:lwfilter2}
{\fontsize{10}{4}\selectfont
\begin{algorithmic}[0]
\Procedure{alternative style Liu-West Filter }{}%\Comment{Use $p$ to approximate $q$}
    \If{$t=1$} 
        \State Store $a = (3\delta -1)/(2\delta) $ where $\delta$ is a user-chosen tuning parameter.
        \State Select $s$ and obtain posterior samples from $p(\theta \mid y_{1:s})$ ($s = 0$ corresponds with no posterior simulation)
        \For{$i=1,\ldots,N$} 
            \State Sample $\tilde{\theta}^i_1 \sim p(\theta \mid y_{1:s})$
            \State Sample $\tilde{X}_1^{i} \sim q(x_1 \mid y_1, \tilde{\theta}_1^i)$
            \State Compute $w(\tilde{X}_1^{i}, \tilde{\theta}_1^i) = \frac{g(y_1 \mid \tilde{X}_1^{i}, \tilde{\theta}^i_1)\mu(\tilde{X}_1^{i} \mid \tilde{\theta}^i_1)}{q(\tilde{X}_1^{i} \mid y_1, \tilde{\theta}^i_1)}$
        \EndFor
        \State Compute $\hat{p}(y_1) = N^{-1} \sum_{i} w(\tilde{X}_1^i, \tilde{\theta}_1^i)$
        \State Compute $\hat{\mathbb{E}}[f(x_1) \mid y_{1}] = \sum_{i=1}^N \frac{w(\tilde{X}_1^i, \tilde{\theta}_1^i)}{ \sum_j w(\tilde{X}_1^j, \tilde{\theta}_1^j) } f(\tilde{X}_1^i)$
        \If{resampling criterion satisfied}
        \For{$i=1,\ldots,N$}
                \State Sample $I_1^i$ with $P(I_1^i = k) \propto w(\tilde{X}_1^{k}, \tilde{\theta}_1^k)$, $k \in \{ 1, \ldots, N\}$
                \State Assign $X_1^i = \tilde{X}_1^{I_1^i}$ and $\theta_1^i = \tilde{\theta}_1^{I_1^i}$
                \State Assign $w(X_1^{i}, \theta_1^i) = 1$
        \EndFor
        \Else \Comment{see note}
        \For{$i=1,\ldots,N$}
            \State Assign $X_1^i = \tilde{X}_1^{i}$ and $\theta_1^i = \tilde{\theta}_1^{i}$
            \State Assign $w(X_1^{i}, \theta_1^i) = w(\tilde{X}_1^{i}, \tilde{\theta}_1^i) $
        \EndFor
        \EndIf
    \Else \Comment{ ($t > 1$)}
        \For{$i=1,\ldots,N$} 
            \State Compute $m^i_{t-1} = a \theta_{t-1}^i + (1-a) N^{-1} \sum_{j} \theta_{t-1}^j$
            \State Sample $\tilde{\theta}_t^i \sim \mathcal{N}(m_{t-1}^{i}, h^2 V_{t-1})$ where $V_t$ is the sample covariance matrix of the parameter particles.   
            \State Sample $\tilde{X}_t^i \sim f(x_t \mid X_{t-1}^{i}, y_{t-1}, \tilde{\theta}_t^i)$
            \State Compute $w(\tilde{X}_t^i, \tilde{\theta}_t^i) = w(X_{t-1}^i, \theta_{t-1}^i) \frac{g(y_t \mid \tilde{X}_t^i, \tilde{\theta}_t^i)f(\tilde{X}_t^i \mid X_{t-1}^i, y_{t-1}, \tilde{\theta}_t^i) }{q(\tilde{X}_t^i \mid X_{t-1}^i, y_t, \tilde{\theta}_t^i)}$
        \EndFor
        \State Compute $\hat{p}(y_t \mid y_{1:t-1}) = N^{-1} \sum_{i} w(\tilde{X}_t^i, \tilde{\theta}_t^i)$
        \State Compute $\hat{\mathbb{E}}[f(x_t) \mid y_{1:t}] = \sum_{i=1}^N \frac{w(\tilde{X}_t^i, \tilde{\theta}_t^i)}{ \sum_j w(\tilde{X}_t^j, \tilde{\theta}_t^j) } f(\tilde{X}_t^i)$
        \If{resampling criterion satisfied}
        \For{$i=1,\ldots,N$}
                \State Sample $I_t^i$ with $P(I_t^i = k) \propto w(\tilde{X}_t^k, \tilde{\theta}_t^k)$, $k \in \{ 1, \ldots, N\}$
                \State Assign $X_t^i = \tilde{X}_t^{I_t^i}$ and $\theta_t^i = \tilde{\theta}_t^{I_t^i}$
                \State Assign $w(X_t^{i}, \theta_t^i) = 1$
        \EndFor
        \Else \Comment{see note}
        \For{$i=1,\ldots,N$}
            \State Assign $X_t^i = \tilde{X}_t^{i}$ and $\theta_t^i = \tilde{\theta}_t^{i}$
            \State Assign $w(X_t^{i}, \theta_t^i) = w(\tilde{X}_t^{i}, \tilde{\theta}_t^i) $
        \EndFor
        \EndIf
    \EndIf
\EndProcedure
\end{algorithmic}
}
\end{algorithm}

% TODO when reporting results for cond likes (mentioning score motivation) mention that these results are probably quite sensitive to the "priors"

\begin{algorithm}
\caption{Particle Swarm Filter}\label{alg:pswarm_filter}
{\fontsize{10}{4}\selectfont
\begin{algorithmic}[0]
\Procedure{Particle Swarm Filter }{}%\Comment{Use $p$ to approximate $q$}
\For{$i=1,\ldots,N_\theta$}
    \State Draw $\theta^i \sim p(\theta \mid y_{1:s})$ ($s = 0$ corresponds with prior simulation)
\EndFor
\For{$t=1,2,\ldots$}
    \If{$t=1$}
        \For{$i=1,\ldots,N_\theta$} 
            \For{$j=(i-1)N + 1, \ldots, i N$} 
                \State Sample $\tilde{X}_1^{j} \sim q(x_1 \mid y_1, \theta^i)$
                \State Compute $w(\tilde{X}_1^{j}) := \frac{g(y_1 \mid \tilde{X}_1^{j})\mu(\tilde{X}_1^{j} \mid \theta^i)}{q(\tilde{X}_1^{j} \mid y_1, \theta^i)}$
            \EndFor
            \State Compute $\hat{p}(y_1 \mid \theta^i) := N^{-1} \sum_{j} w(\tilde{X}_1^j)$
            \State Compute $\hat{\mathbb{E}}[f(x_1) \mid y_{1},\theta^i] = \sum_{j=1}^N \frac{w(\tilde{X}_1^j)}{ \sum_k w(\tilde{X}_1^k) } f(\tilde{X}_1^j)$ 
            \For{$j=(i-1)N + 1, \ldots, i N$}
                \State Sample $I_1^j$ with $P(I_1^j = k) \propto w(\tilde{X}_1^{k})$, $k \in \{ (i-1)N + 1, \ldots, i N\}$
                \State Assign $X_1^j := \tilde{X}_1^{I_1^j}$
            \EndFor
        \EndFor
        \State Compute $\hat{p}(y_1) := N_\theta^{-1} \sum_{i} \hat{p}(y_1 \mid \theta^i)$
    \Else \Comment{ ($t > 1$)}
        \For{$i=1,\ldots,N_\theta$}
            \For{$j=(i-1)N + 1, \ldots, i N$}
                \State Sample $\tilde{X}_t^{j} \sim q(x_t \mid y_{1:t}, X^j_{t-1}, \theta^i)$
                \State Compute $w(\tilde{X}_t^{j}) := \frac{g(y_t \mid \tilde{X}_t^{j}) f(\tilde{X}_t^{j} \mid X^j_{t-1}, y_{t-1}, \theta^i)}{q(x_t \mid y_{1:t}, X^j_{t-1}, \theta^i)}$
            \EndFor
            \State Compute $\hat{p}(y_t \mid y_{1:t-1}, \theta^i) := N^{-1} \sum_{i=1}^N
        w(\tilde{X}_t^i)$ 
            \State Compute $\hat{\mathbb{E}}[f(x_t) \mid y_{1:t},\theta^i] = \sum_{j=1}^N \frac{w(\tilde{X}_t^j)}{ \sum_k w(\tilde{X}_t^k) } f(\tilde{X}_t^j)$ 
            \For{$j=(i-1)N + 1, \ldots, i N$}
                \State Sample $I_t^j$ with $P(I_t^j = k) \propto w(\tilde{X}_t^{k})$, $k \in \{ (i-1)N + 1, \ldots, i N\}$
                \State Assign $X_t^j := \tilde{X}_1^{I_t^j}$
            \EndFor
        \EndFor
        \State Compute $\hat{p}(y_t \mid y_{1:t-1}) := N_\theta^{-1} \sum_{i} \hat{p}(y_t \mid y_{1:t-1}, \theta^i)$
    \EndIf
\EndFor
\EndProcedure
\end{algorithmic}
}
\end{algorithm}

%-----------------------------------------------------------------------------%
%-----------------------------------------------------------------------------%

\section{Appendix B: MCMC Information and Diagnostics} \label{sec:appendixB}

First, the prior this is used for posterior estimation has been specified in section \ref{sec:posterior_samp_comp}. 

Regarding the parameter proposal distribution, we use a random walk multivariate normal distribution on a transformed parameter space for $(\text{logit}(\phi), \mu, \log \sigma^2, \text{logit}([\rho + 1]/2)$. The proposal covariance matrix is set to $\mathbf{I}\frac{2.38^2}{4}$. The likelihood evaluations are approximated with a SISR filter cite using $100$ particles and the state transition as a state proposal distribution (i.e. a ``bootstrap filter"). For each proposed parameter, $7$ particle filters are run in parallel, and their likelihoods are averaged together before it's plugged in to the Hastings ratio. 100,000 iterations are run. 

Each parameter yields an $\hat{R}$ diagnostic of less than $1.001$. The trace plots and correlograms are shown below in Figures \ref{fig:mcmc_mu_diagnostics}, \ref{fig:mcmc_phi_diagnostics}, \ref{fig:mcmc_ss_diagnostics}, and \ref{fig:mcmc_rho_diagnostics}.

\begin{figure}
     \centering
     \begin{subfigure}[b]{0.45\textwidth}
         \centering
         \includegraphics[width=\textwidth]{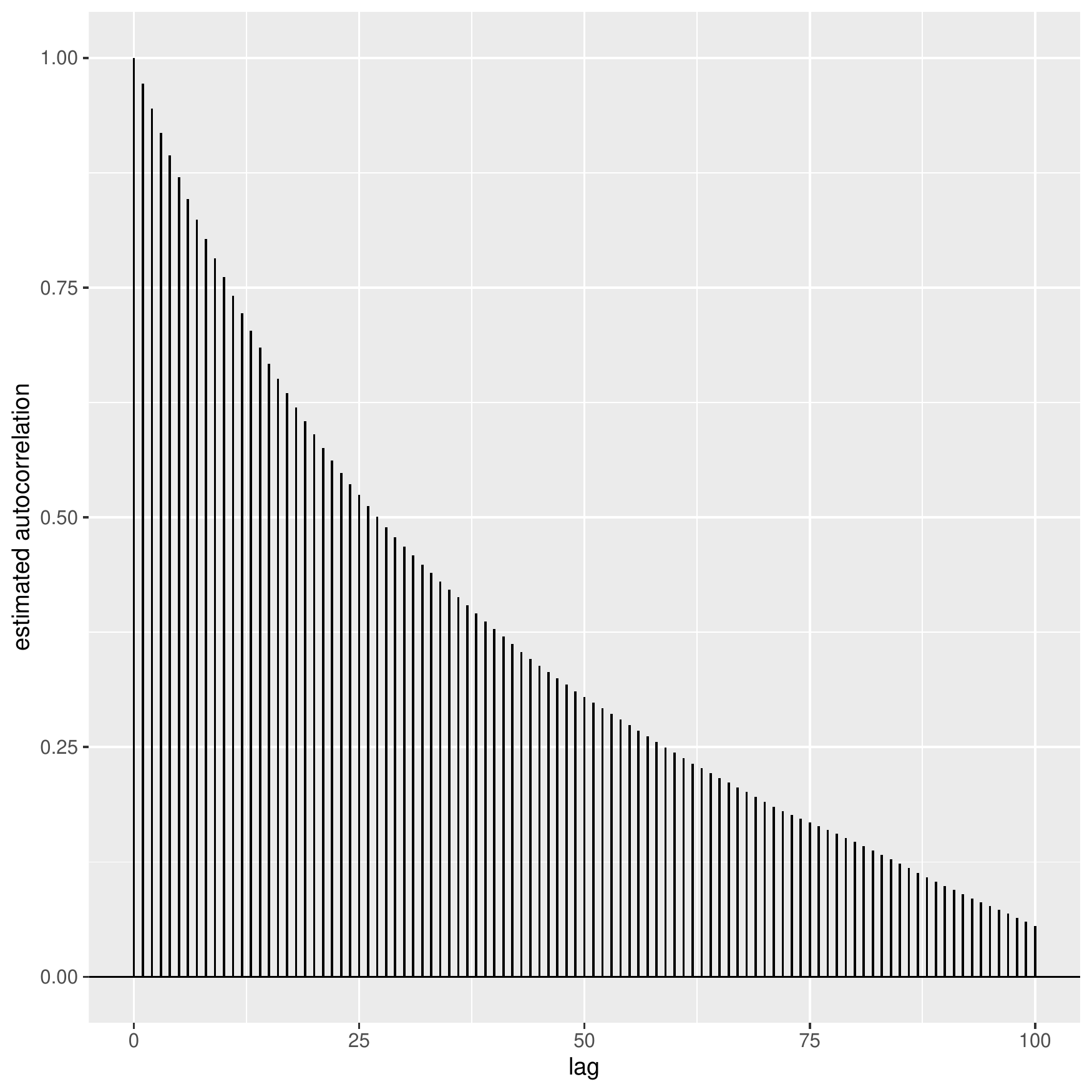}
         \caption{Autocorrelation plot for $\mu$ samples}
         \label{fig:mu_acf}
     \end{subfigure}
     \hfill
     \begin{subfigure}[b]{0.45\textwidth}
         \centering
         \includegraphics[width=\textwidth]{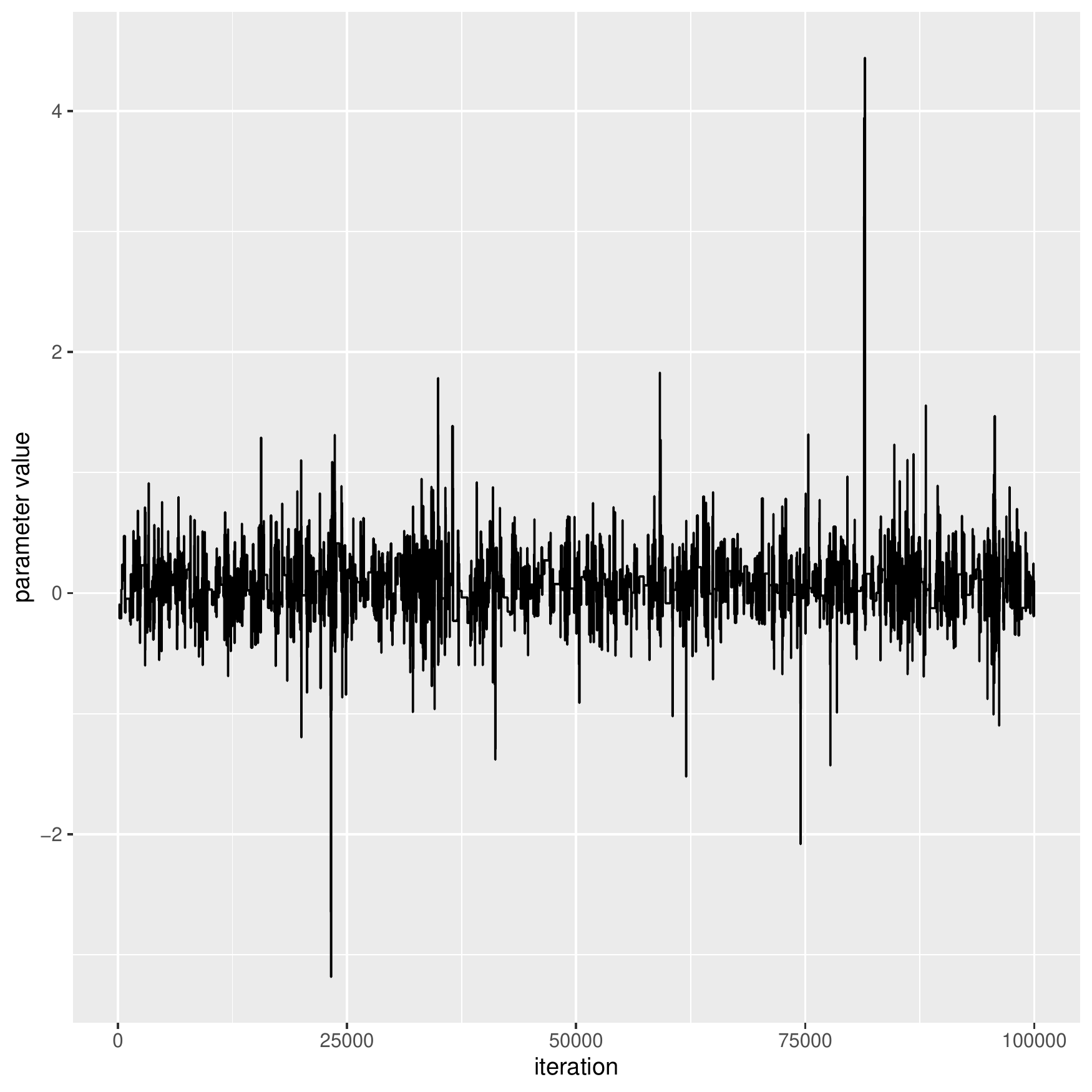}
         \caption{Trace plot for $\mu$ samples}
         \label{fig:mu_trace}
     \end{subfigure}
     \caption{MCMC diagnostic plots}
        \label{fig:mcmc_mu_diagnostics}
\end{figure}

\begin{figure}
     \centering
    \begin{subfigure}[b]{0.45\textwidth}
         \centering
         \includegraphics[width=\textwidth]{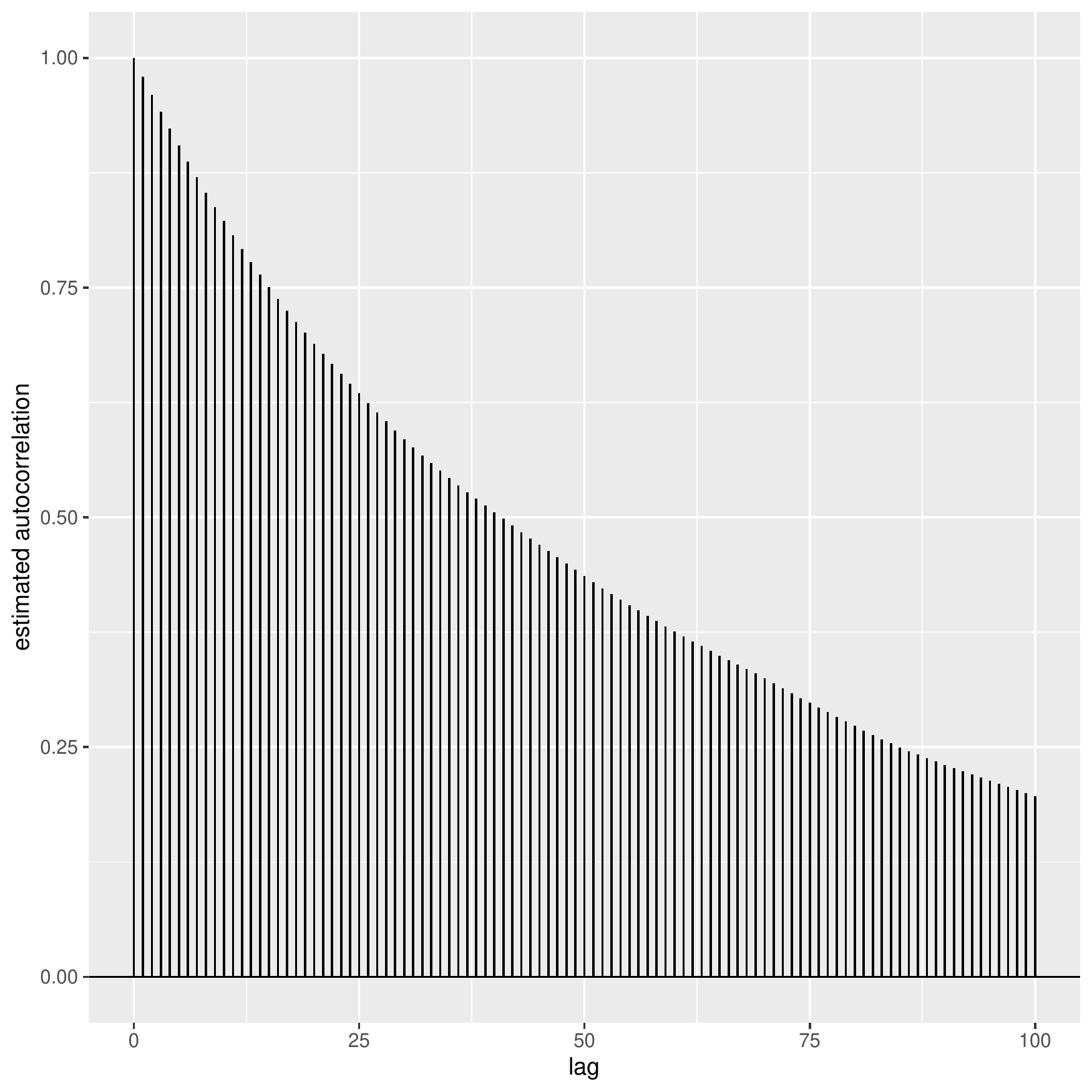}
         \caption{Autocorrelation plot for $\phi$ samples}
         \label{fig:phi_acf}
     \end{subfigure}
     \hfill
     \begin{subfigure}[b]{0.45\textwidth}
         \centering
         \includegraphics[width=\textwidth]{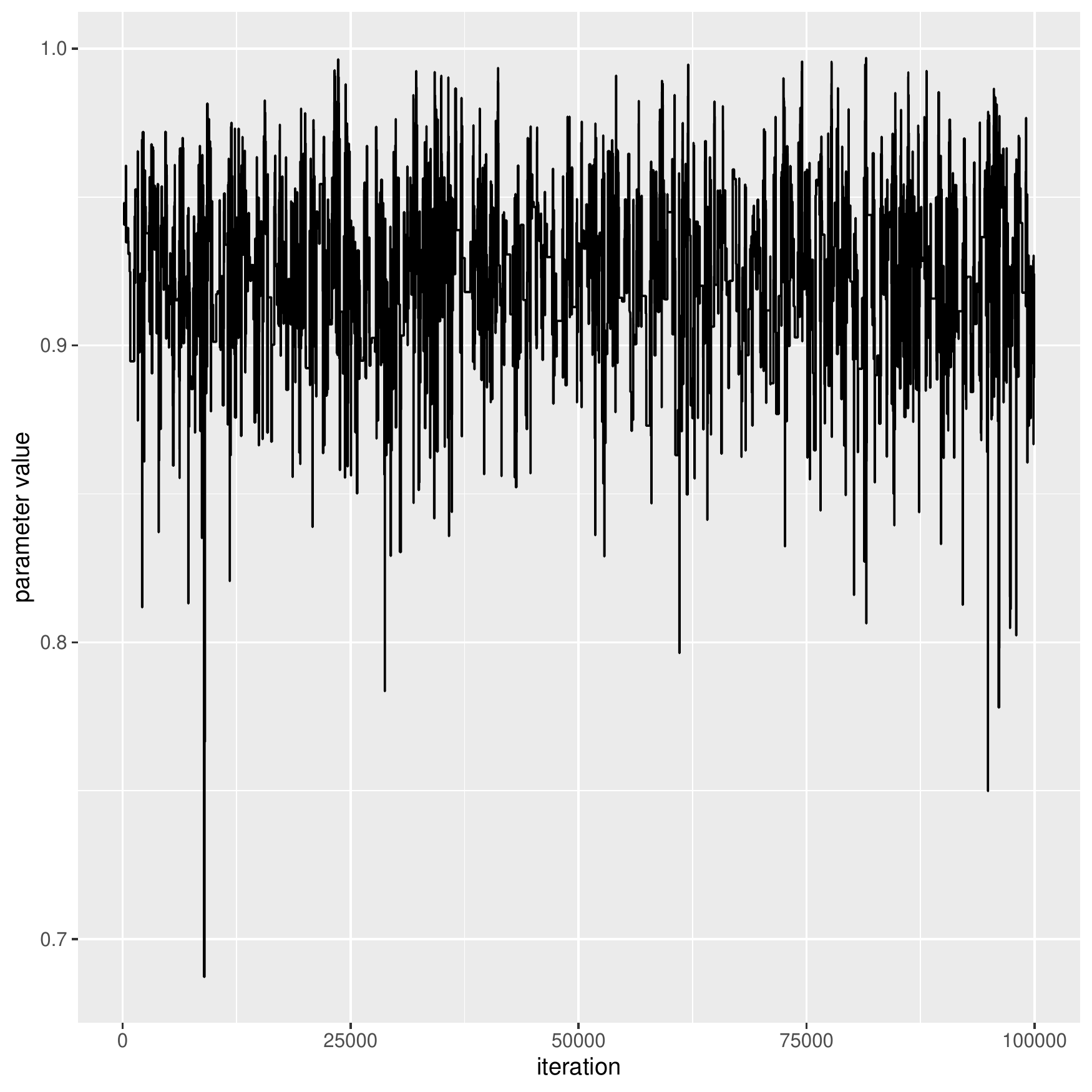}
         \caption{Trace plot for $\phi$ samples}
         \label{fig:phi_trace}
     \end{subfigure}
     \caption{MCMC diagnostic plots}
        \label{fig:mcmc_phi_diagnostics}
\end{figure}

\begin{figure}
     \centering
    \begin{subfigure}[b]{0.45\textwidth}
         \centering
         \includegraphics[width=\textwidth]{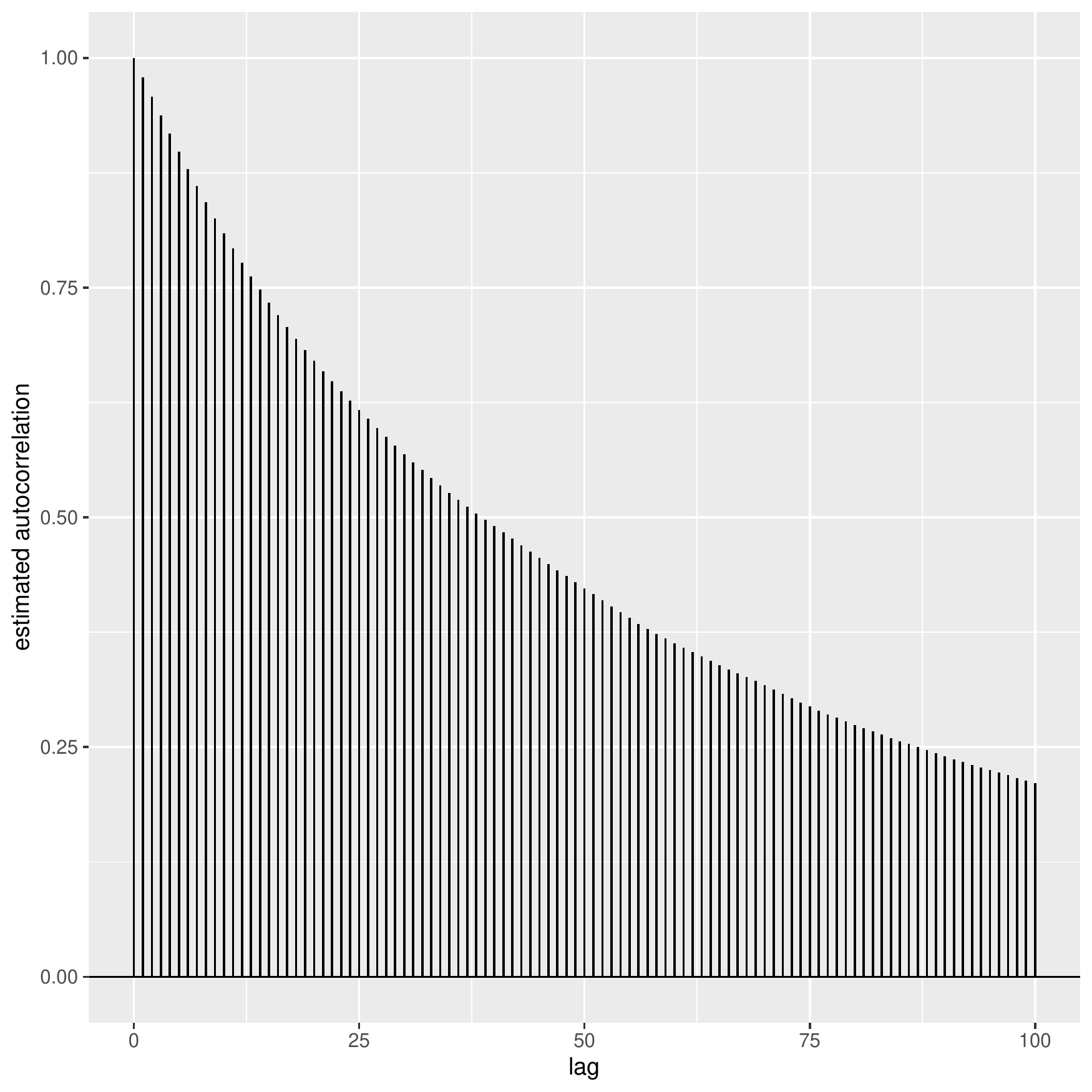}
         \caption{Autocorrelation plot for $\sigma^2$ samples}
         \label{fig:sigmaSq_acf}
     \end{subfigure}
     \hfill
     \begin{subfigure}[b]{0.45\textwidth}
         \centering
         \includegraphics[width=\textwidth]{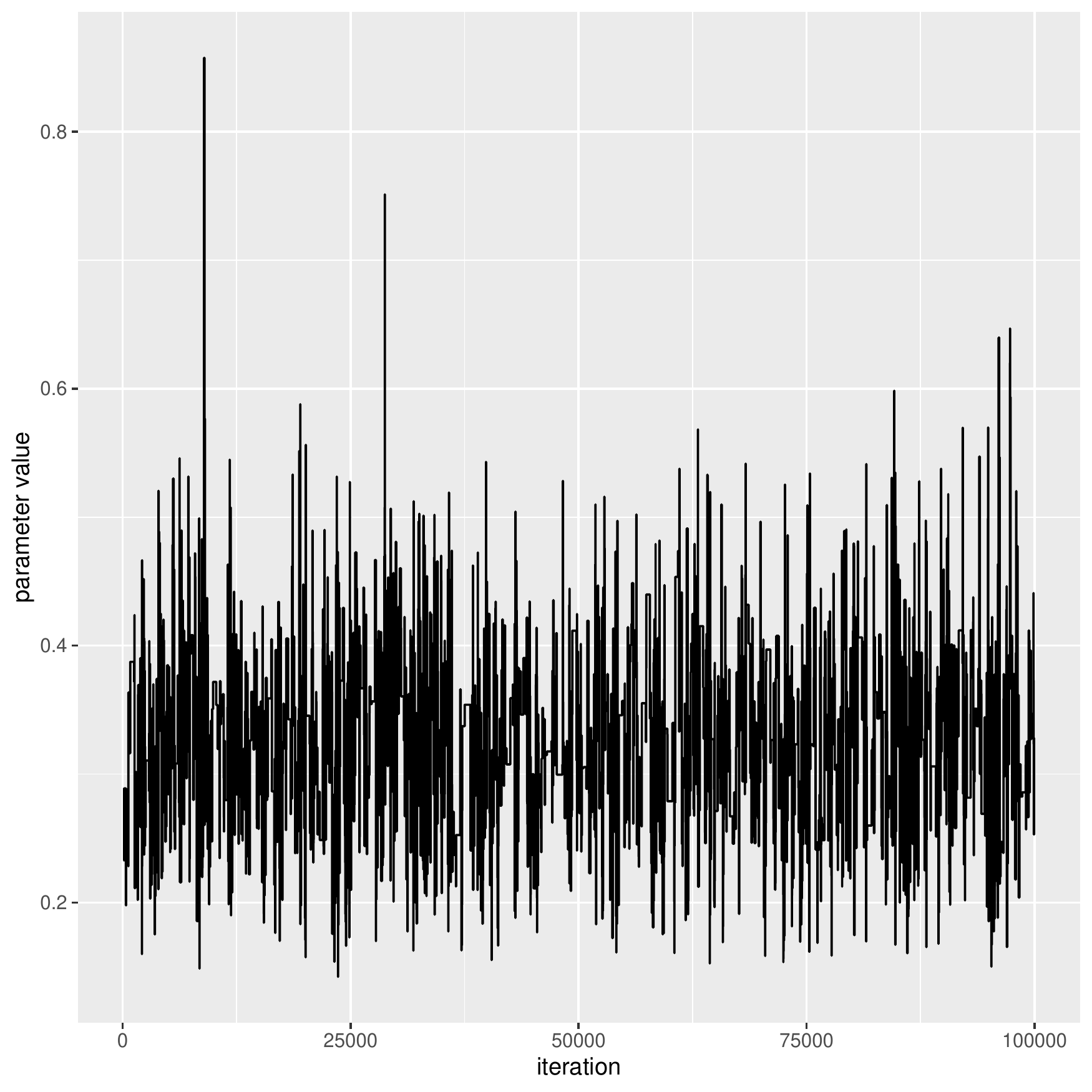}
         \caption{Trace plot for $\sigma^2$ samples}
         \label{fig:sigmaSq_trace}
     \end{subfigure}
     \caption{MCMC diagnostic plots}
        \label{fig:mcmc_ss_diagnostics}
\end{figure}

\begin{figure}
     \centering
    \begin{subfigure}[b]{0.45\textwidth}
         \centering
         \includegraphics[width=\textwidth]{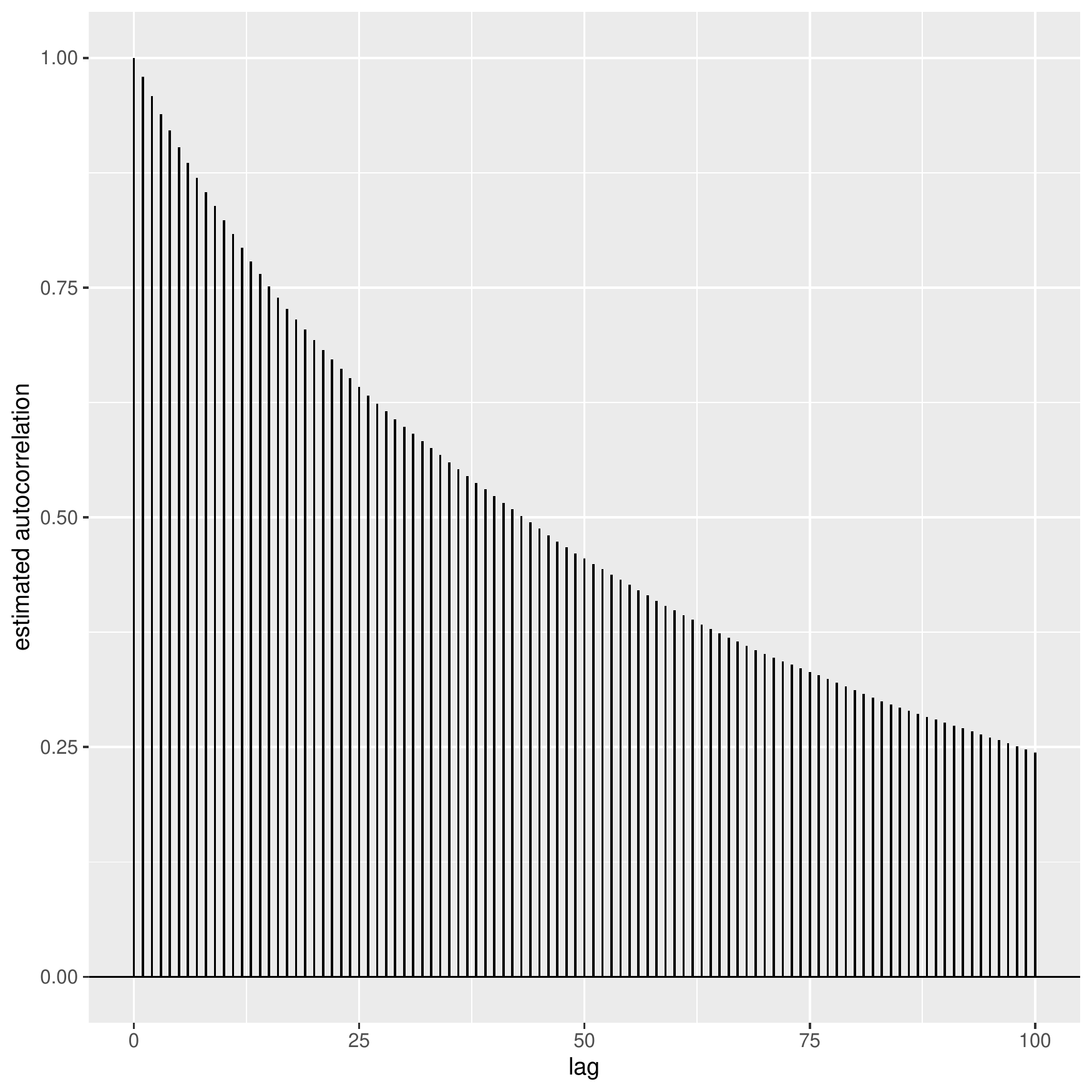}
         \caption{Autocorrelation plot for $\rho$ samples}
         \label{fig:rho_acf}
     \end{subfigure}
     \hfill
     \begin{subfigure}[b]{0.45\textwidth}
         \centering
         \includegraphics[width=\textwidth]{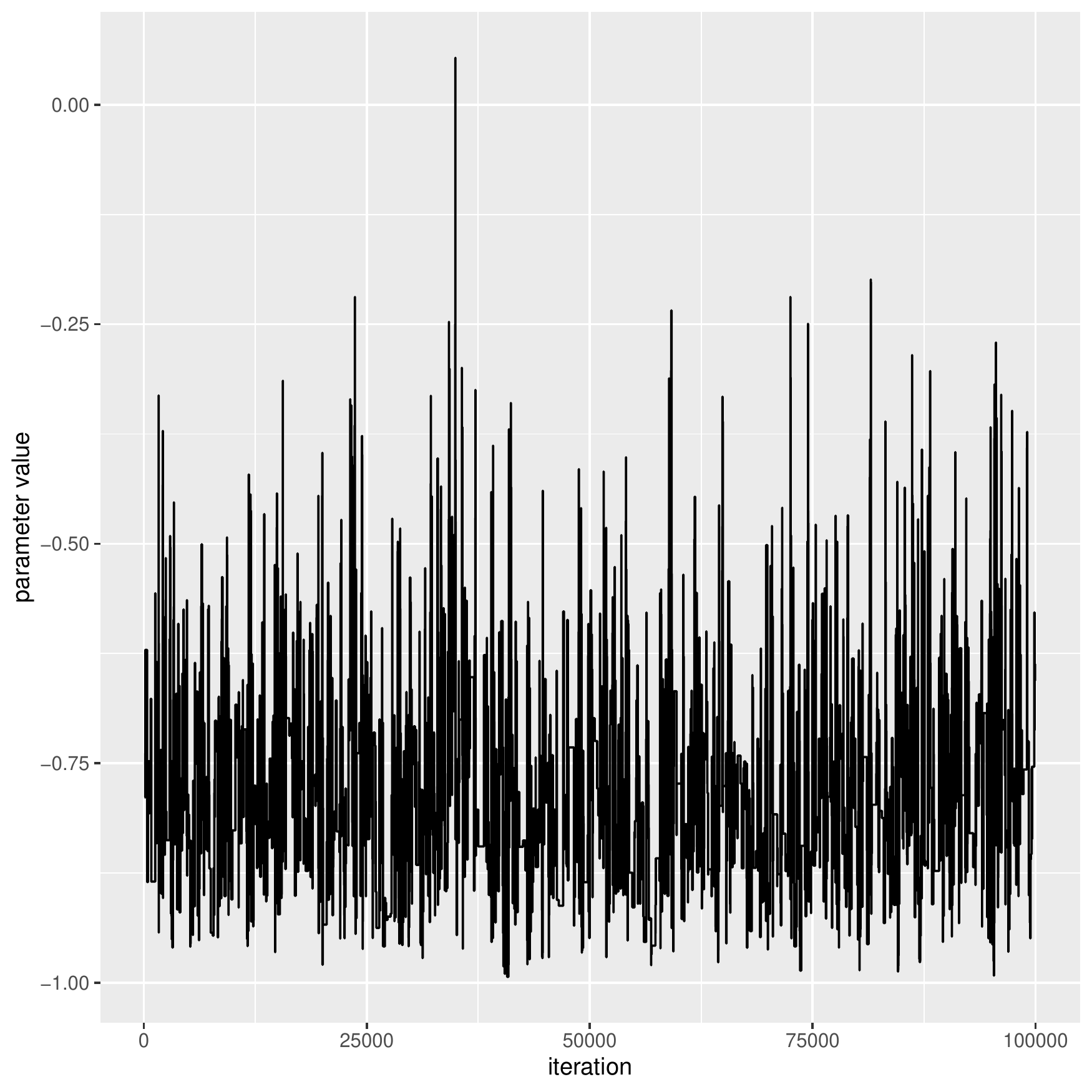}
         \caption{Trace plot for $\rho$ samples}
         \label{fig:rho_trace}
     \end{subfigure}
     \caption{MCMC diagnostic plots}
        \label{fig:mcmc_rho_diagnostics}
\end{figure}

% \begin{tabular}{|c|c|}
%       \hline
%       \addheight{\includegraphics[width=40mm]{images/mu_acf.pdf}} &
%       \addheight{\includegraphics[width=40mm]{images/mu_trace.pdf}} \\
%       \addheight{\includegraphics[width=40mm]{images/phi_acf.pdf}} &
%       \addheight{\includegraphics[width=40mm]{images/phi_trace.pdf}} \\
%       \addheight{\includegraphics[width=40mm]{images/sigma_acf.pdf}} &
%       \addheight{\includegraphics[width=40mm]{images/sigma_trace.pdf}} \\
%       \addheight{\includegraphics[width=40mm]{images/rho_acf.pdf}} &
%       \addheight{\includegraphics[width=40mm]{images/rho_trace.pdf}} \\
%       \hline
% \end{tabular}

\end{appendices}

\bibliographystyle{plain} % We choose the "plain" reference style
\bibliography{mybib} % Entries are in the refs.bib file

\begin{thebibliography}{10}

\bibitem{pmcmc}
Christophe Andrieu, Arnaud Doucet, and Roman Holenstein.
\newblock Particle markov chain monte carlo methods.
\newblock {\em Journal of the Royal Statistical Society: Series B (Statistical
  Methodology)}, 72(3):269--342, 2010.

\bibitem{andrieu2009}
Christophe Andrieu and Gareth~O. Roberts.
\newblock The pseudo-marginal approach for efficient monte carlo computations.
\newblock {\em The Annals of Statistics}, 37(2):697--725, 04 2009.

\bibitem{JSSv094c01}
Peter Baker.
\newblock Using gnu make to manage the workflow of data analysis projects.
\newblock {\em Journal of Statistical Software, Code Snippets}, 94(1):1–46,
  2020.

\bibitem{blake1997condensation}
Andrew Blake and Michael Isard.
\newblock The condensation algorithm-conditional density propagation and
  applications to visual tracking.
\newblock In {\em Advances in Neural Information Processing Systems}, pages
  361--367, 1997.

\bibitem{Blei2017}
David~M. Blei, Alp Kucukelbir, and Jon~D. McAuliffe.
\newblock Variational inference: A review for statisticians.
\newblock {\em Journal of the American Statistical Association},
  112(518):859--877, 2017.

\bibitem{brooks2011handbook}
Steve Brooks, Andrew Gelman, Galin Jones, and Xiao-Li Meng.
\newblock {\em Handbook of Markov Chain Monte Carlo}.
\newblock CRC press, 2011.

\bibitem{Brown_A_Case-Study_of_2022}
Taylor Brown.
\newblock {A Case-Study of Some Forecasting Algorithms}, 8 2022.

\bibitem{Brown_SSME_a_c_2022}
Taylor Brown.
\newblock {SSME: a c++ static library for state space model forecasting and
  estimation}, 8 2022.

\bibitem{pswarm}
Taylor~R. Brown.
\newblock Approximating posterior predictive distributions by averaging output
  from many particle filters, 2020.

\bibitem{pf_paper}
Taylor~R. Brown.
\newblock A short introduction to pf: A c++ library for particle filtering.
\newblock {\em Journal of Open Source Software}, 5(54):2599, 2020.

\bibitem{particlelearning}
Carlos~M. Carvalho, Michael~S. Johannes, Hedibert~F. Lopes, and Nicholas~G.
  Polson.
\newblock Particle learning and smoothing.
\newblock {\em Statist. Sci.}, 25(1):88--106, 02 2010.

\bibitem{smcsquared}
N.~Chopin, P.~E. Jacob, and O.~Papaspiliopoulos.
\newblock {SMC2: an efficient algorithm for sequential analysis of state space
  models}.
\newblock {\em Journal of the Royal Statistical Society: Series B (Statistical
  Methodology)}, 75(3):397--426, June 2013.

\bibitem{chopin_clt}
Nicolas Chopin.
\newblock Central limit theorem for sequential monte carlo methods and its
  application to bayesian inference.
\newblock {\em The Annals of Statistics}, 32, 09 2005.

\bibitem{tutorial}
Arnaud Doucet and Adam~M. Johansen.
\newblock A tutorial on particle filtering and smoothing: fifteen years later,
  2011.

\bibitem{mcmcsuffstatspfs}
Paul Fearnhead.
\newblock Markov chain monte carlo, sufficient statistics, and particle
  filters.
\newblock {\em Journal of Computational and Graphical Statistics},
  11(4):848--862, 2002.

\bibitem{mcmcforssms}
Paul Fearnhead.
\newblock Mcmc for state space models.
\newblock In Galin L.~Jones Steve~Brooks, Andrew~Gelman and Xiao-Li Meng,
  editors, {\em Handbook of Markov Chain Monte Carlo}. Chapman \& Hall/CRC,
  2011.

\bibitem{resamplemove}
Walter~R. Gilks and Carlo Berzuini.
\newblock Following a moving target—monte carlo inference for dynamic
  bayesian models.
\newblock {\em Journal of the Royal Statistical Society: Series B (Statistical
  Methodology)}, 63(1):127--146, 2001.

\bibitem{scoring_rules}
Tilmann Gneiting and Adrian~E Raftery.
\newblock Strictly proper scoring rules, prediction, and estimation.
\newblock {\em Journal of the American Statistical Association},
  102(477):359--378, 2007.

\bibitem{log_scoring_rle}
I.~J. Good.
\newblock Rational decisions.
\newblock {\em Journal of the Royal Statistical Society. Series B
  (Methodological)}, 14(1):107--114, 1952.

\bibitem{gordonetal}
N.J. Gordon, D.J. Salmond, and A.F.M. Smith.
\newblock Novel approach to nonlinear/non-gaussian bayesian state estimation.
\newblock {\em IEEE Proceedings F, Radar and Signal Processing},
  140(2):107--113, 1993.

\bibitem{eigenweb}
Ga\"{e}l Guennebaud, Beno\^{i}t Jacob, et~al.
\newblock Eigen v3.
\newblock http://eigen.tuxfamily.org, 2010.

\bibitem{ggtern}
Nicholas~E. Hamilton and Michael Ferry.
\newblock {ggtern}: Ternary diagrams using {ggplot2}.
\newblock {\em Journal of Statistical Software, Code Snippets}, 87(3):1--17,
  2018.

\bibitem{est_asym_svol}
Andrew~C. Harvey and Neil Shephard.
\newblock Estimation of an asymmetric stochastic volatility model for asset
  returns.
\newblock {\em Journal of Business \& Economic Statistics}, 14(4):429--434,
  1996.

\bibitem{cts_time_svol_lev}
John Hull and Alan White.
\newblock The pricing of options on assets with stochastic volatilities.
\newblock {\em The Journal of Finance}, 42(2):281--300, 1987.

\bibitem{estimationreview}
N.~{Kantas}, A.~{Doucet}, S.~S. {Singh}, J.~{Maciejowski}, and N.~{Chopin}.
\newblock {On Particle Methods for Parameter Estimation in State-Space Models}.
\newblock {\em ArXiv e-prints}, December 2014.

\bibitem{selforganizing}
Genshiro Kitagawa.
\newblock A self-organizing state space model.
\newblock 93(443):1203--1215, September 1998.

\bibitem{resampling_methods}
Tiancheng Li, Miodrag Bolic, and Petar~M. Djuric.
\newblock Resampling methods for particle filtering: Classification,
  implementation, and strategies.
\newblock {\em IEEE Signal Processing Magazine}, 32(3):70--86, 2015.

\bibitem{liuandwest}
Jane Liu and Mike West.
\newblock {\em Combined Parameter and State Estimation in Simulation-Based
  Filtering}, chapter~10.
\newblock springer, 2001.

\bibitem{auxpf}
Michael~K. Pitt and Neil Shephard.
\newblock Filtering via simulation: Auxiliary particle filters.
\newblock {\em Journal of the American Statistical Association},
  94(446):590--599, 1999.

\bibitem{quantmod}
Jeffrey~A. Ryan and Joshua~M. Ulrich.
\newblock {\em quantmod: Quantitative Financial Modelling Framework}, 2022.
\newblock R package version 0.4.20.

\bibitem{ggally}
Barret Schloerke, Di~Cook, Joseph Larmarange, Francois Briatte, Moritz Marbach,
  Edwin Thoen, Amos Elberg, and Jason Crowley.
\newblock {\em GGally: Extension to 'ggplot2'}, 2021.
\newblock R package version 2.1.2.

\bibitem{gnu_make}
Richard~M. Stallman, Roland McGrath, and Paul~D. Smith.
\newblock {\em GNU Make: A Program for Directing Recompilation, for Version
  3.81}.
\newblock Free Software Foundation, 2004.

\bibitem{storvik}
Geir Storvik.
\newblock Particle filters in state space models with the presence of unknown
  static parameters, 2002.

\bibitem{ggplot}
Hadley Wickham.
\newblock {\em ggplot2: Elegant Graphics for Data Analysis}.
\newblock Springer-Verlag New York, 2016.

\bibitem{Xu2019}
Yaxian Xu and Ajay Jasra.
\newblock Particle filters for inference of high-dimensional multivariate
  stochastic volatility models with cross-leverage effects.
\newblock {\em Foundations of Data Science}, 0(0):0--0, 2019.

\bibitem{Yu2005}
Jun Yu.
\newblock On leverage in a stochastic volatility model.
\newblock {\em Journal of Econometrics}, 127(2):165--178, August 2005.

\end{thebibliography}

\end{document}